\documentclass[11pt]{article}

\usepackage{amsmath}
\usepackage{amssymb}
\usepackage{amsthm}
\usepackage{tikz}
\usepackage{graphicx}
\usepackage{float}
\usepackage{algorithm}
\usepackage[noend]{algpseudocode}
\usepackage{tabularx}
\usepackage[margin=1in]{geometry}
\usepackage{arydshln}
\usepackage{multirow}
\usepackage{wrapfig}
\usepackage{url}

\usetikzlibrary{positioning,calc,shapes,arrows}
\usetikzlibrary{backgrounds}
\usetikzlibrary{matrix,shadows,arrows} 
\usetikzlibrary{decorations.pathreplacing}

\def\etal.{et\penalty50\ al.}
\theoremstyle{plain}
\newtheorem{theorem}{Theorem}[section]
\newtheorem{lemma}[theorem]{Lemma}

\newtheorem{corollary}[theorem]{Corollary}

\newtheorem{claim}[theorem]{Claim}

\theoremstyle{definition}
\newtheorem{definition}{Definition}[section]

\theoremstyle{remark}

\theoremstyle{plain}
\newtheorem*{theorem*}{Theorem}

\DeclareMathOperator{\OPT}{OPT}
\DeclareMathOperator{\ALG}{ALG}

\DeclareMathOperator{\RTPAM}{RTPAM}
\DeclareMathOperator{\Poi}{Poi}
\DeclareMathOperator{\Bin}{Bin}
\DeclareMathOperator{\Var}{Var}
\DeclareMathOperator{\Ber}{Ber}

\title{Greedy Bipartite Matching in Random Type Poisson Arrival Model}
\author{Allan Borodin\thanks{Research is supported by NSERC.} \\ University of Toronto \\ \textsf{bor@cs.toronto.edu}  \and 
Christodoulos Karavasilis\footnotemark[1] \\ University of Toronto \\ \textsf{ckar@cs.toronto.edu} \and 
Denis Pankratov\footnotemark[1] \\ University of Toronto \\ \textsf{denisp@cs.toronto.edu}}
\date{\today}

\begin{document}

\maketitle

\begin{abstract}
We introduce a new random input model for bipartite matching which we call the Random Type Poisson Arrival Model. Just like in the known i.i.d.  model (introduced by Feldman et al.~\cite{FeldmanMMM2009}), online nodes have types in our model. In contrast to the adversarial types studied in the known i.i.d.  model, following the random graphs studied in    Mastin and Jaillet~\cite{MastJailGnnp}, in our model each type graph is generated randomly by including each offline node in the neighborhood of an online node with probability $c/n$ independently. In our model, nodes of the same type appear consecutively in the input and the number of times each  type node appears is distributed according to the Poisson distribution with parameter 1. We analyze the performance of the simple greedy algorithm under this input model. The performance is controlled by the parameter $c$ and we are able to exactly characterize the competitive ratio for the regimes $c = o(1)$ and $c = \omega(1)$. We also provide a precise bound on the expected size of the matching in the remaining regime of constant $c$. We compare our results to the previous work of Mastin and Jaillet who analyzed the simple greedy algorithm in the $G_{n,n,p}$ model where each online node type occurs exactly once. We essentially show that the approach of Mastin and Jaillet can be extended to work for the Random Type Poisson Arrival Model, although several nontrivial technical challenges need to be overcome. Intuitively, one can view the Random Type Poisson Arrival Model as the $G_{n,n,p}$ model with less randomness; that is, instead of each online node having a new type, each online node has a chance of repeating the previous type.
\end{abstract}

\section{Introduction}
\label{sec:intro}
Online bipartite matching is a problem with a wide variety of applications and has received significant attention after the seminal work of Karp, Vazirani and Vazirani~\cite{KarpVV90} who showed an optimal $(1-1/e)$ randomized algorithm for the unweighted case in the adversarial input model. Applications such as  internet advertising (see the excellent survey by Mehta~\cite{MehtaSurvey} and references therein) and job allocation have given rise to problems that are naturally described as bipartite matching problems and this has been a strong motivation for developing better algorithms. Most of the work in online bipartite matching is with respect to a vertex arrival model where vertices on one side of the bipartite graph are known to the algorithm in advance and vertices on the other side are revealed online along with their adjacent vertices. 

Adversarial models are, of course, pessimistic in terms of performance results. For many applications, more realistic assumptions will lead to significantly improved results. In this regard, various stochastic input models have been proposed and analyzed. These include the random order model, 
known and unknown i.i.d.  distribution models, and the Erd\H{o}s-R\'{e}nyi random graph model. We propose a new model (see Section~\ref{sec:preliminaries}) that is closely related to both the i.i.d.  model and the Erd\H{o}s-R\'{e}nyi model. The motivation for our model is that in applications, one can often observe some bursts of identical inputs. One can think of this as a very restricted form of a Markov chain model. 

Whereas previous work in the i.i.d.  models take advantage of the input distributions in terms of the decisions being made (i.e. as to how to match the online vertices), we follow the work of Mastin and Jaillet~\cite{MastJailGnnp} who study the performance of the simplest greedy algorithm (which we will call \textsc{Greedy}) that always matches (when possible) an online vertex to the first (in some fixed ordering of the offline vertices) unmatched adjacent neighbor. We note that this  simple greedy algorithm does not utilize any information about the nature of the input sequence (including the number of online vertices). 

We know that theoretically (i.e. in terms of the expected approximation ratio), that knowledge of the distribution and/or randomization in the algorithm will allow for online algorithms with improved performance. However, in simulations with respect to various stochastic models, we find that the simplest  deterministic greedy algorithm is  competitive with more specialized algorithms \cite{BKP2018}. We also note that when a type graph is chosen adversarially in the i.i.d. model, Goel and Mehta \cite{GoelM2008} show that the approximation ratio of  \textsc{Greedy} is precisely $1-1/e \approx .632$.   

Mastin and Jaillet show that the  approximation ratio of  \textsc{Greedy} in the Erd\H{o}s-R\'{e}nyi model is at least $0.837$. Intuitively, our model introduces correlations between consecutive online nodes. This causes significant technical difficulties in the analysis that we are able to overcome. In addition, the correlations do not seem favourable to the  \textsc{Greedy} algorithm, and it is natural to expect \textsc{Greedy} to have worse performance in our model than in the Erd\H{o}s-R\'{e}nyi model. Our results confirm this intuition. Our input model has a parameter $c$ that controls the expected degree of online nodes. Similar to the work of Mastin and Jaillet~\cite{MastJailGnnp}, we analyze the performance of \textsc{Greedy} in three different regimes of $c$: $c = o(1)$, $c = \omega(1)$, and constant $c$. We compute the exact asymptotic fraction of offline nodes matched by \textsc{Greedy} in each of these regimes. We also show how to derive an upper bound on the size of a maximum matching in our model from the existing upper bounds on the size of a maximum matching in the Erd\H{o}s-R\'{e}nyi model. Combining the two results, we derive a lower bound on the approximation ratio of \textsc{Greedy} in all regimes of $c$. Minimizing this lower bound over $c$, we find that \textsc{Greedy} has approximation ratio $\ge 0.715$ for all values of $c$ in our model. 

\emph{Organization.} The rest of the paper is organized as follows. Section~\ref{sec:prelim} describes our model and different views of it, as well as some background material needed for the rest of the paper. Section~\ref{sec:greedy_poisson} is the technical core of the paper and contains the analysis of \textsc{Greedy} in the three regimes of $c$: $c = o(1)$, $c = \omega(1)$, and $c = \Theta(1)$. Section~\ref{sec:conclusion} concludes the paper with a brief discussion of various input models and some open questions.

\section{Preliminaries}
\label{sec:prelim}
\label{sec:preliminaries}
We shall consider bipartite graphs $G=(U, V, E)$ in the vertex arrival model. The vertices in $U$ are referred to as the offline vertices and are known to an algorithm in advance. The vertices in $V$ are referred to as online vertices and arrive one at a time. When a vertex from $V$ is revealed, all its neighbors in $U$ are revealed as well. A simple greedy algorithm matches the arriving vertex $v \in V$ to the first available neighbor in $U$ (if there is one, according to some fixed ordering). We shall denote this algorithm by \textsc{Greedy}. We consider the behavior of \textsc{Greedy} on specific families of random graphs generated by, what we call, the Random Type Poisson Arrival Model. This model has two parameters: $n \in \mathbb{N}$ which is equal to $|U|$ and intuitively measures the size of the instance, and $c \in \mathbb{R}_{\ge 0}$ which controls the density of edges. The random graph in our model is generated as follows: the offline nodes $U$ are set to be $[n]$. Online nodes are generated iteratively and randomly using the following process. For each $i \in [n]$ generate a random type $i$ by including each offline node $j \in [n]$ in the neighborhood of the type with probability $c/n$ independently. Then we  sample $Z_i$ from the Poisson distribution with parameter 1 and generate $Z_i$ online nodes of type $i$, and continue. We shall denote a graph $G$ distributed according to the Random Type Poisson Arrival Model with parameters $n$ and $c$ as $G \sim \RTPAM(n, c)$. Our model can be viewed in different ways. We refer to the current view of the $\RTPAM(n,c)$ model as the ``one-step view.'' Next we describe another view.

The bipartite Erd\H{o}s-R\'{e}nyi model is denoted by $G_{n,n,p}$. The random graph $G = (U, V, E)$ in the $G_{n,n,p}$ model is generated as follows: $U = V = [n]$, and each edge $\{i,j\}$ is included in $G$ with probability $p$ independently. Our $\RTPAM(n,c)$ model can be alternatively viewed as a two-step process. At first, we generate a \emph{type graph} $\widehat{G}=(U,V,E)$ from the  $G_{n,n,p}$ distribution where $p = c/n$. For each type $i \in V$ we sample $Z_i \sim \Poi(1)$. Secondly, an \emph{instance graph} is created by keeping the same $U$ as in the type graph, and replicating each type $i$ node $Z_i$ times (note that this means removing type $i$ when $Z_i = 0$). In the rest of the paper, we shall freely switch between the two different views of the $\RTPAM(n,c)$ model. Thus, we shall occasionally refer to the type graph of the $\RTPAM(n,c)$ graph. We refer to this alternative view of the $\RTPAM(n,c)$ model as the ``two-step view.''

Intuitively, in the one-step view, $Z_i$ is drawn from the Poisson distribution  in an online fashion whereas in the two-step view, $Z_i$ is drawn initially. It should be clear that these views do not change the model. However, our proofs are facilitated by having these different views. 

We shall measure the performance of an algorithm in two ways: in terms of the asymptotic approximation ratio, and in terms of the fraction of the matched offline nodes.

\begin{definition}
Let $\ALG$ be a deterministic online algorithm solving the bipartite matching problem over random graphs $G_n$ parameterized by the input size $n$. We write $\ALG(G_n)$ to denote the size of the matching (random variable) that is constructed by running $\ALG$ on $G_n$. We write $\OPT(G_n)$ to denote the size of a maximum matching in $G_n$. \emph{The asymptotic approximation ratio} of $\ALG$ with respect to $G_n$ is defined as:
\[\rho(\ALG, G_n) = \liminf_{n \rightarrow \infty} \frac{\mathbb{E}(\ALG(G_n))}{\mathbb{E}(\OPT(G_n))}.\]
\emph{The fraction of matched offline nodes} of $\ALG$ with respect to $G_n$ is defined as:
\[\mu(\ALG, G_n) = \liminf_{n \rightarrow \infty} \frac{\mathbb{E}(\ALG(G_n))}{n}.\]
\end{definition}

We shall use the following notation for the well-known distributions:

\begin{definition}
\begin{itemize}
    \item $\Poi(\lambda)$ --- the Poisson distribution with parameter $\lambda$,
    \item $\Bin(n, p)$ --- the Binomial distribution with parameters $n \in \mathbb{Z}_{\ge 0}$ (number of trials) and $p \in [0,1]$ (probability of success)
    \item $\Ber(p)$ --- the Bernoulli distribution with parameter $p$.
\end{itemize}
We shall also write $\Poi(\lambda)$, $\Bin(n,p)$, etc., as a placeholder for a random variable distributed according to the corresponding distribution. This is done to simplify the notation when the name of the random variable is not important and only the parameters of the distribution are of interest.
\end{definition}

\begin{definition}
We write $\mathbb{I}(E)$ to denote the indicator random variable for an event $E$.
\end{definition}

In the paper, we show how to compute $\mu(\textsc{Greedy}, \RTPAM(n,c))$ exactly. In order to provide a bound on the asymptotic approximation ratio of \textsc{Greedy} we also need to know the size of a maximum matching in $\RTPAM(n,c)$. The size of a maximum matching is not known even in $G_{n,n,c/n}$ model, but there is a known upper bound due to Bollob\'{a}s and Brightwell that we will be able to use to derive a lower bound on the asymptotic approximation ratio of \textsc{Greedy} in the $\RTPAM(n,c)$ model.

\begin{theorem}[\cite{BollobasB95}]
\label{thm:match_gnnp_ub}
\[\frac{\mathbb{E}(\OPT(G_{n,n,c/n}))}{n} \le 2 - \frac{\gamma^* + \gamma_* + \gamma^* \gamma_*}{c} + o(1),\]
where $\gamma_*$ is the smallest solution to the equation $x = c\exp(-c \exp(-x))$ and $\gamma^*=c \exp(-\gamma_*)$. See Figure~\ref{fig:match_gnnp_ub} for the shape of this upper bound as a function of $c$.
\end{theorem}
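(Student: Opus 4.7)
The plan is to use K\"onig's theorem applied to the bipartite graph $G_{n,n,c/n}$, which gives $\OPT(G) = 2n - |\mathrm{MIS}(G)|$ where $\mathrm{MIS}(G)$ denotes a maximum independent set. An upper bound on $\mathbb{E}(\OPT)$ therefore follows from a lower bound of the form $n(\gamma^* + \gamma_* + \gamma^*\gamma_*)/c$ on $\mathbb{E}(|\mathrm{MIS}|)$. I would exhibit such an IS through a Karp--Sipser-style local peeling rule and analyze it using local weak convergence of $G_{n,n,c/n}$ to the bipartite two-type Galton--Watson tree with Poisson($c$) offspring on each side.

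The coupled system $\gamma_* = ce^{-\gamma^*}$, $\gamma^* = ce^{-\gamma_*}$ is the nontrivial fixed point of the map $(x,y) \mapsto (ce^{-y}, ce^{-x})$, which is precisely the stability condition for the recursive distributional equations describing the peeling on the limit tree. Substituting $r = \gamma_*/c = e^{-\gamma^*}$ and $u = \gamma^*/c = e^{-\gamma_*}$, the equations reduce to $r = e^{-cu}$ and $u = e^{-cr}$, the standard message-passing system encoding ``a subtree root is \emph{waiting} iff none of its children are \emph{removed}'' for the two sides. Once the peeling terminates, I would read off the contribution to $|\mathrm{MIS}|$ as a sum of three pieces: densities of left and right vertices left in the IS (totalling $e^{-\gamma^*} + e^{-\gamma_*}$ via marginals), plus a quadratic bonus $ce^{-\gamma^* - \gamma_*} = \gamma^*\gamma_*/c$ coming from a joint-probability computation along edges. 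Summing and applying K\"onig's identity then yields the claimed bound.

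The hard part will be twofold. First, short cycles in $G_{n,n,c/n}$ must be handled: on the tree limit the peeling produces a valid IS, but on the finite graph a vanishing density of short cycles can create in-IS conflicts that require a corrective pruning step whose cost must be shown to be $o(n)$. Second, the quadratic correction $\gamma^*\gamma_*/c$ is not a marginal statistic; it requires tracking joint labels on adjacent vertices and amounts to a second-moment / edge-density computation rather than a sum of vertex-level probabilities. I expect to handle both via a combination of local weak convergence, a union bound on short cycles, and the differential equation method to track the fluid limit of the peeling dynamics, whose equilibrium is exactly the fixed-point system stated in the theorem. Selecting the \emph{smallest} solution $\gamma_*$ corresponds to the attracting fixed point reached by the iteration starting from $x_0 = 0$, and is needed so that the bound remains valid across the phase transition at the critical $c$ where asymmetric solutions emerge.
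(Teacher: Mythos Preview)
The paper does not prove this statement. Theorem~\ref{thm:match_gnnp_ub} is quoted from Bollob\'as and Brightwell~\cite{BollobasB95} and invoked as a black box (its only use is inside the proof of Theorem~\ref{thm:match_ub}, via the inequality $\mathbb{E}(\OPT(\RTPAM(n,c))) \le \mathbb{E}(\OPT(G_{n,n,c/n}))$). There is therefore no in-paper proof to compare your proposal against; what you have written is a sketch of the original Bollob\'as--Brightwell argument, not of anything the present paper does.

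As a sketch of the cited result your outline is on the right track: K\H{o}nig's identity turns an upper bound on the matching into a lower bound on the maximum independent set, and exhibiting a large explicit independent set whose density is governed by the fixed-point system $\gamma_* = c e^{-\gamma^*}$, $\gamma^* = c e^{-\gamma_*}$ is exactly how the bound is obtained. Two cautions. First, your bookkeeping of the three terms is loose: if the left and right IS densities are already $e^{-\gamma^*}$ and $e^{-\gamma_*}$, those alone sum to $(\gamma_* + \gamma^*)/c$ and there is no additive ``quadratic bonus'' on top; the $\gamma^*\gamma_*/c$ term instead appears inside a single inclusion--exclusion count for the constructed set, not as a separate edge-level contribution. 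Second, for an \emph{upper} bound on $\OPT$ one only needs a first-moment lower bound on the size of an explicit independent set, so the ``hard parts'' you flag (short-cycle corrections, second-moment control, fluid-limit dynamics) are not actually required here; those tools matter for matching \emph{lower} bounds, not for the direction stated in the theorem.
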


We shall later compare our results with the following result giving the performance of \textsc{Greedy} in the $G_{n,n,c/n}$ model due to Mastin and Jaillet.

\begin{theorem}[\cite{MastJailGnnp}]
\label{thm:greedy_gnnp}
For $c \in \mathbb{R}_{> 0}$ we have
\[ \mu(\textsc{Greedy}, G_{n,n,c/n}) = 1- \frac{\log(2-e^{-c})}{c}.\]
\end{theorem}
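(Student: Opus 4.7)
The plan is to analyze \textsc{Greedy} by tracking the single statistic $U_t := $ number of unmatched offline vertices after the first $t$ online vertices have been processed, and to show that $U_t/n$ converges to the solution of an autonomous ODE whose value at $\tau = 1$ gives the stated formula.

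\textbf{Step 1: Markov chain via deferred decisions.} Process the online vertices one at a time. When online vertex $t+1$ arrives, expose its edges to $U$. The edges to vertices that are already matched are irrelevant to \textsc{Greedy}'s decision. For each of the $U_t$ currently unmatched offline vertices, the edge is (still) independently present with probability $p = c/n$, because these edges have not been inspected in any prior step. Hence the new vertex has at least one neighbor among the unmatched set with probability $1 - (1-p)^{U_t}$, and \textsc{Greedy} matches in that case. Consequently $(U_t)_{t=0}^n$ is a Markov chain with $U_0 = n$ and
\[ \mathbb{E}[U_{t+1} - U_t \mid U_t] = -\bigl(1 - (1 - c/n)^{U_t}\bigr). \]

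\textbf{Step 2: ODE heuristic and its rigorous form.} Writing $\tau = t/n$ and $u(\tau) = U_t/n$, and using $(1-c/n)^{nu} \to e^{-cu}$, the expected per-step drift is $-\tfrac{1}{n}\bigl(1 - e^{-cu(\tau)}\bigr) + O(1/n^2)$. This suggests the ODE
\[ u'(\tau) = -\bigl(1 - e^{-c\,u(\tau)}\bigr), \qquad u(0) = 1. \]
To turn this heuristic into a theorem, I would invoke Wormald's differential equation method (or equivalently build a Doob martingale and apply Azuma--Hoeffding): the trim function $f(\tau,u) = -(1-e^{-cu})$ is Lipschitz in $u$ uniformly over $\tau \in [0,1]$, and the one-step change $|U_{t+1} - U_t| \le 1$ is bounded, so standard hypotheses are satisfied. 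This yields $U_t/n = u(t/n) + o(1)$ with high probability, uniformly in $t$, which suffices to conclude $\mathbb{E}[U_n]/n \to u(1)$.

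\textbf{Step 3: Solving the ODE.} Separating variables,
\[ \int \frac{du}{1 - e^{-cu}} = \int \frac{e^{cu}}{e^{cu}-1}\,du = \frac{1}{c}\log(e^{cu}-1) + C, \]
so the initial condition $u(0)=1$ gives $e^{cu(\tau)} - 1 = (e^c - 1) e^{-c\tau}$, i.e.
\[ u(\tau) = \frac{1}{c}\log\bigl(1 + (e^c - 1)e^{-c\tau}\bigr). \]
Evaluating at $\tau = 1$ gives $u(1) = \frac{1}{c}\log(2 - e^{-c})$. Therefore
\[ \mu(\textsc{Greedy}, G_{n,n,c/n}) = 1 - u(1) = 1 - \frac{\log(2 - e^{-c})}{c}. \]

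\textbf{Main obstacle.} The conceptual content is light: everything rests on the principle-of-deferred-decisions observation in Step 1, after which the calculation is routine. The technical work lives entirely in Step 2 — making precise that the discrete process $U_t/n$ tracks the ODE solution uniformly in $t$ up to $t=n$, not just at early times, and transferring the high-probability concentration into a statement about $\mathbb{E}[U_n]/n$. This is a well-trodden application of Wormald's method, but it is the only place where care is actually required.
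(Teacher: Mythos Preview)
The paper does not prove this statement; it is quoted as a known result of Mastin and Jaillet~\cite{MastJailGnnp} and used only for comparison. Your argument is correct and is, in fact, the standard one (and essentially what Mastin and Jaillet do): track the number of unmatched offline vertices, observe via deferred decisions that the one-step drift is $-(1-(1-c/n)^{U_t})$, apply Wormald's theorem, and solve the resulting separable ODE. This is also precisely the template the present paper adapts for the $\RTPAM(n,c)$ model in Section~\ref{sec:greedy_poisson}, where the drift becomes $\mathbb{E}(\min(\Bin(n-Y_i,c/n),\Poi(1))\mid Y_i)$ instead of $1-(1-c/n)^{n-Y_i}$; the extra work there is exactly the approximation step (Lemma~\ref{lem:key_lem_1}) that has no analogue in your simpler $G_{n,n,c/n}$ setting.
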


\section{Greedy Matching in Random Type Poisson Arrival Model}
\label{sec:greedy_poisson}
\subsection{The Regime of $c = o(1)$}

In this section we show that in expectation \textsc{Greedy} finds an almost-maximum matching in $\RTPAM(n, c)$ when $c = o(1)$. The high level idea is to consider the two-step view of $\RTPAM(n,c)$ and observe that when $c = o(1)$ most of the type graph consists of \emph{isolated} edges --- edges with both endpoints of degree 1. Thus, no matter how many times an online node corresponding to an isolated edge is generated in the instance graph, both \textsc{Greedy} and $\OPT$ can match it exactly once (given that it is generated at all). The expected number of the non-isolated edges is of smaller order of magnitude than the expected number of isolated edges and can be ignored for the purpose of computing the asymptotic approximation ratio.

\begin{theorem}
\label{thm:little_c_regime}
Let $c = o(1)$ then we have:
\[\rho (\textsc{Greedy}, \RTPAM(n,c)) = 1.\]
\end{theorem}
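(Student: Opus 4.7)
The plan is to work in the two-step view of $\RTPAM(n,c)$: first generate the type graph $\widehat{G} \sim G_{n,n,c/n}$, then replicate each type $i$ by an independent $Z_i \sim \Poi(1)$. Call an edge $\{i,j\}$ of $\widehat{G}$ \emph{isolated} if both $i$ and $j$ have degree one in $\widehat{G}$, and let $I$ and $N$ denote the numbers of isolated and non-isolated edges of $\widehat{G}$, respectively. Since whether $\{i,j\}$ is isolated depends on $2n-1$ independent edge trials, I would compute
\[\mathbb{E}(I) \;=\; n^{2}\cdot\frac{c}{n}\cdot\Bigl(1-\frac{c}{n}\Bigr)^{2(n-1)} \;=\; cn\,(1-o(1))\]
whenever $c=o(1)$, and since $\mathbb{E}(|E(\widehat{G})|)=cn$, this gives $\mathbb{E}(N) = cn - \mathbb{E}(I) = o(cn)$.

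The main step is to prove matching bounds on \textsc{Greedy} from below and on $\OPT$ from above, both in terms of $I$ and $N$. For the lower bound I would argue that if $\{i,j\}$ is isolated, then in the instance graph $j$ is adjacent only to the $Z_i$ copies of type $i$, so $j$ is necessarily free when the first copy of type $i$ arrives (if any), and \textsc{Greedy} matches the pair. Using independence of the $Z_i$'s from $\widehat{G}$ and $\Pr(Z_i\ge 1)=1-e^{-1}$, this yields
\[\mathbb{E}(\ALG) \;\geq\; (1-e^{-1})\,\mathbb{E}(I).\]
For the upper bound, I would observe that any matching $M$ of the instance graph injects into $E(\widehat{G})$ via the projection $(\text{offline},\text{online})\mapsto(\text{offline},\text{type})$ (injective because offline endpoints are distinct within $M$). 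Splitting this projection by whether the image edge of $\widehat{G}$ is isolated, and noting that each isolated type-graph edge contributes to $M$ only when $Z_i\ge 1$, I get
\[\OPT \;\leq\; \sum_{\{i,j\}\text{ isolated in }\widehat{G}} \mathbb{I}(Z_i\ge 1) \;+\; N,\]
whose expectation is $(1-e^{-1})\mathbb{E}(I)+\mathbb{E}(N)$.

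Combining the two bounds, $\mathbb{E}(\OPT)-\mathbb{E}(\ALG) \le \mathbb{E}(N)=o(cn)$ while $\mathbb{E}(\ALG) \ge (1-e^{-1})\mathbb{E}(I)=\Omega(cn)$, so $\mathbb{E}(\ALG)/\mathbb{E}(\OPT)\to 1$ and hence $\rho(\textsc{Greedy},\RTPAM(n,c))=1$. I do not expect serious technical obstacles in this regime: the moment estimates on $I$ and $N$ are short, and the only mildly delicate point is the injective projection used to control $\OPT$ by edges of $\widehat{G}$---once that is in place, the proof reduces to the intuition already stated in the theorem's preamble, namely that in the sparse regime almost every type-graph edge is its own component and the Poisson multiplicities cannot hurt either $\OPT$ or \textsc{Greedy}.
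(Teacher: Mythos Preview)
Your proposal is correct and follows essentially the same approach as the paper: lower-bound \textsc{Greedy} by isolated type-graph edges whose type appears at least once, and upper-bound $\OPT$ by (a weighting of) the edges of the type graph, then compare. The only cosmetic difference is that the paper bounds $\OPT$ directly by $\sum_j Q_j\,\mathbb{I}(Z_j\ge 1)=n^2p(1-1/e)$ (the degree of each type times the indicator that the type appears) rather than splitting into isolated and non-isolated edges, which yields the slightly cleaner ratio $(1-p)^{2n-2}\to 1$; your injective-projection formulation is equivalent and equally valid.
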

\begin{proof}{(similar to the proof of Lemma 2 in \cite{MastJailGnnp})}

Set $p = c/n = o(1/n)$, and consider the two-step view of $\RTPAM(n,c)$. The probability that an edge between $i$ and $j$ appears and is isolated in \emph{the type graph} is:

\[\Pr(\{i,j\} \text{ is isolated in the type graph })=p(1-p)^{2n-2}.\]

Observe that if type $j$ is generated at least once in \emph{the instance graph}, i.e., $Z_j \ge 1$, and $\{i,j\}$ is an isolated edge in $\emph{the type graph}$ then \textsc{Greedy} will include $\{i,j\}$ in the matching exactly once. In addition, observe that the events $Z_j \ge 1$ and ``$\{i,j\}$ is isolated in the type graph'' are independent. Let $W_{i,j}$ be a random variable indicating whether $\{i,j\}$ is included by \textsc{Greedy}. Then, we have:

\[ \Pr(W_{i,j} = 1) \ge \Pr(Z_j \ge 1) \Pr(\{i,j\} \text{ is isolated in the type graph}) =  \left(1-\frac{1}{e}\right )p(1-p)^{2n-2}.\]

Let $M$ be the matching produced by \textsc{Greedy}. We have $|M| = \sum_{i,j} W_{i,j}$. It follows that:

\[\mathbb{E}(|M|) = \mathbb{E}\left(\sum_{i,j} W_{i,j}\right) = \sum_{i,j} \Pr(W_{i,j} = 1) \geq n^{2} \left(1-\frac{1}{e}\right)p(1-p)^{2n-2}.\]

Let $Q_j$ denote the number of neighbors of a node of type $j$. Observe that in the instance graph the nodes corresponding to type $j$ can be matched in an optimal matching at most $Q_j \mathbb{I}(Z_j \ge 1)$ times. Note that $Q_j$ and $Z_j$ are independent. Let $M^*$ denote a maximum matching in the instance graph. Then we have

\[ \mathbb{E}(|M^*|) \le \sum_j \mathbb{E}(Q_j \mathbb{I}(Z_j \ge 1)) = \sum_j np (1-1/e) = n^2 p (1-1/e).\]

Combining this with the above lower bound on $\mathbb{E}(|M|)$ we get

\[\frac{\mathbb{E}(|M|)}{\mathbb{E}(|M^{*}|)}\geq (1-p)^{2n-2}=\left(e^{-o(\frac{1}{n})}\right)^{2n-2}=e^{-o(1)} \xrightarrow[n \to \infty]{} 1.\]
\end{proof}

\subsection{The Regime of Constant $c$}

Fix a constant $c \in \mathbb{R}_{>0}$. We can view \textsc{Greedy} as constructing the matching in rounds. Consider the one-step view of $\RTPAM(n, c)$.  During round $i$, a new type is created and $Z_i$ nodes corresponding to that type are generated. We let $Y_i$  denote the number of online nodes matched by \textsc{Greedy} by the beginning of round $i$. We also let $X_i$ denote the number of neighbors of type $i$ that were \emph{not} matched in any of the earlier rounds. In this section, we show how to compute the asymptotic fraction of matched offline nodes by \textsc{Greedy} \emph{exactly}. More specifically, we derive an asymptotically accurate (implicit) expression for $Y_n$ and show how to compute it for each value of $c$. In addition, we show that existing upper bounds on the maximum matching in the $G_{n,n,c/n}$ model carry over to the $\RTPAM(n,c)$ model. This allows us to derive lower bounds on the \emph{competitive ratio} of \textsc{Greedy} in $\RTPAM(n,c)$.

\textit{High level idea.} We use the method of partial differential equations (see, e.g., \cite{Wormald99,Wormald1995,Kurtz70}) to derive the asymptotic behavior of $Y_n$. The goal is to write the expression $\mathbb{E}(Y_{i+1} - Y_i \mid Y_i)$ in terms of $Y_i/n$, i.e., $\mathbb{E}(Y_{i+1}-Y_i \mid Y_i) = f_c(Y_i/n)$ for some ``simple'' function $f_c$. This gives us a difference equation for $Y_i$. Now, pretend that there is a function $g_c: [0,1] \rightarrow[0,1]$ that gives a good approximation to $Y_i$, i.e.,  $g_c(t) \approx Y_{tn}/n$ for $t \in [0,1]$. Consider a syntactic replacement of the difference equation for $Y_i$ with a differential equation for $g_c$, i.e., $g_c'(t) = f_c(g_c(t))$. In addition, set the correct initial value condition $g_c(0) = Y_0 = 0$. The differential equation method allows us to conclude that under a mild condition on $f_c$ (namely, being Lipschitz), the solution $g_c$ is unique and asymptotically converges to $Y_n/n$, i.e., $Y_n = g_c(1) n + o(n)$. In particular, we have $\mu(\textsc{Greedy}, \RTPAM(n,c)) = g_c(1)$. In our setting, we will see that it is not clear how to write $\mathbb{E}(Y_{i+1} - Y_i \mid Y_i)$ as a function of $Y_i/n$.  It turns out that the method still works as long as $\mathbb{E}(Y_{i+1}-Y_i \mid Y_i)$ is close to $f_c(Y_i/n)$ in the following sense: $\lim_{n \rightarrow \infty} | \mathbb{E}(Y_{i+1}-Y_i \mid Y_i) - f_c(Y_i/n)| = 0.$ This is precisely what we do in this section.

The number of nodes matched by \textsc{Greedy} in round $i$ is exactly equal to $\min(X_i, Z_i)$. By the definition of the $\RTPAM(n, c)$ model, we have $X_i \sim \Bin(n -Y_i, c/n)$. Therefore, we have 
\begin{equation}
\label{eq:main_eq}
    \mathbb{E}(Y_{i+1}-Y_i \mid Y_i) = \mathbb{E}(\min(\Bin(n-Y_i, c/n), \Poi(1)) \mid Y_i).
\end{equation} 
Unfortunately, as mentioned above this expectation does not seem to have a nice form and we do not know how to set up an associated differential equation. Instead, we shall approximate the difference equation $\mathbb{E}(Y_{i+1}-Y_i \mid Y_i)$ by another expression that is easier to handle. Those familiar with the Poisson limit theorem will immediately recognize the following as the most natural choice:

\[\mathbb{E}(\min(\Bin(n-Y_i, c/n), \Poi(1)) \mid Y_i) \approx \mathbb{E}(\min(\Poi(c(1-Y_i/n)), \Poi(1))\mid Y_i).\]

We need to analyze how accurate this approximation is, but first we show how to derive a relatively simple expression for the right hand side. Define $h(x) := \mathbb{E}(\min(\Poi(x), \Poi(1)))$. Although the function $h(x)$ does not have a closed-form expression in terms of widely-known functions such as $\sin, \cos, \exp,$ etc., it does have a closed-form expression in terms of the modified Bessel functions of the first kind and the Marcum's Q functions:

\begin{definition}
The modified Bessel functions of the first kind are defined as follows:
\[I_k(x) = \sum_{i = 0}^\infty \frac{1}{i! \Gamma(i+k+1)} \left( \frac{x}{2} \right)^{2i+k}.\]
For an integer $k \ge 0$, it becomes $I_k(x) = \sum_{i = 0}^\infty \frac{1}{i! (i+k)!} \left( \frac{x}{2} \right)^{2i+k}.$ We note that the modified Bessel functions have the following symmetry property: $I_k(x) = I_{-k}(x).$
\end{definition}

\begin{definition}
Marcum's Q function is defined as follows:
\[ Q_n(a,b) = \exp\left(- \frac{a^2+b^2}{2} \right) \sum_{k = 1 -n}^\infty \left(\frac{a}{b} \right)^k I_k(ab).\]
\end{definition}

Now, the closed-form expression for $h(x)$ can be derived from the answer on Stats Stackexchange~\cite{Skellam}. For completeness, we reproduce the derivation in Appendix~\ref{sec:tech_lemmas}.

\begin{lemma}[\cite{Skellam}]
\label{lem:h_expr}
For $x > 0$ we have
\[ h(x) = \frac{1+x - 2 e^{-x-1}\left(I_0(2 \sqrt{x}) + \sqrt{x}I_1(2\sqrt{x})\right) - (1-x)\left(1-2Q_1(\sqrt{2 x}, \sqrt{2})\right)}{2}.\]
\end{lemma}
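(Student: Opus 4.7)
The plan is to use the classical identity $\min(a,b) = \tfrac{1}{2}(a+b-|a-b|)$ to reduce the problem to computing $\mathbb{E}|K|$, where $K=A-B$ with $A\sim\Poi(x)$, $B\sim\Poi(1)$ independent. Since $\mathbb{E}(A+B)=x+1$, we immediately get $h(x)=\tfrac{1}{2}(x+1-\mathbb{E}|K|)$, so everything reduces to evaluating $\mathbb{E}|K|$ for a Skellam random variable.

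First I would derive the distribution of $K$ using its probability generating function $\mathbb{E}(t^K)=e^{x(t-1)+(t^{-1}-1)}$ together with the classical expansion $\sum_{m\in\mathbb{Z}} I_m(z) t^m = \exp\!\bigl(z(t+t^{-1})/2\bigr)$ applied at $z=2\sqrt{x}$ and $t=\sqrt{x}\cdot t$. Matching coefficients (and using $I_{-k}=I_k$) gives the Skellam PMF $P(K=k)=e^{-(x+1)}x^{k/2}I_{|k|}(2\sqrt{x})$ for every integer $k$. Substituting into the expectation yields $\mathbb{E}|K|=e^{-(x+1)}\sum_{k=1}^\infty k\bigl(x^{k/2}+x^{-k/2}\bigr)I_k(2\sqrt{x})$.

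Next I would apply the Bessel recurrence $kI_k(z)=\tfrac{z}{2}(I_{k-1}(z)-I_{k+1}(z))$ with $z=2\sqrt{x}$, so that each sum telescopes after a re-indexing. Introducing the auxiliary quantities $T_+:=\sum_{m\ge 0}I_m(2\sqrt{x})x^{m/2}$ and $T_-:=\sum_{m\ge 1}I_m(2\sqrt{x})x^{-m/2}$, the algebra collapses to
\[ \mathbb{E}|K| \;=\; e^{-(x+1)}\Bigl[(x-1)(T_+-T_-) \;+\; 2 I_0(2\sqrt{x}) \;+\; 2\sqrt{x}\,I_1(2\sqrt{x})\Bigr]. \]
Two identities now finish the job: using $I_{-k}=I_k$ one recognizes $T_+ = e^{x+1}Q_1(\sqrt{2x},\sqrt{2})$ directly from the definition of Marcum's $Q$, while evaluating the full generating function at $t=\sqrt{x}$ shows $T_++T_-=e^{x+1}$. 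Therefore $T_+-T_-=e^{x+1}(2Q_1(\sqrt{2x},\sqrt{2})-1)$, and substituting into $h(x)=\tfrac{1}{2}(x+1-\mathbb{E}|K|)$ produces the claimed closed form after rewriting $(x-1)(2Q_1-1)=-(1-x)(1-2Q_1)$.

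I expect the only real care point to be the telescoping step: one must split $S_\pm$ correctly, reindex by $m=k\pm 1$, and track boundary terms $I_0$ and $I_1\sqrt{x}$ that get exposed when the shift pushes sums past $m=0$ or $m=1$. Everything else is mechanical bookkeeping; there is no genuine technical obstacle, which is consistent with the authors' remark that the derivation follows a Stack Exchange argument.
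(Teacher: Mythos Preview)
Your proposal is correct and follows the same opening reduction as the paper: both write $\min(a,b)=\tfrac{1}{2}(a+b-|a-b|)$ and thereby reduce the problem to computing $\mathbb{E}|K|$ for the Skellam variable $K=\Poi(x)-\Poi(1)$, using the standard Skellam PMF in terms of $I_k(2\sqrt{x})$.

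From that point on the two arguments diverge. The paper proceeds by writing the \emph{moment generating function} of $|K|$ in closed form (two Marcum~$Q$ terms plus an $I_0$ correction) and then differentiating at $t=0$ to extract $\mathbb{E}|K|$. You instead compute $\mathbb{E}|K|$ \emph{directly} from the series $\sum_{k\ge 1}k(x^{k/2}+x^{-k/2})I_k(2\sqrt{x})$: applying the Bessel recurrence $kI_k(z)=\tfrac{z}{2}(I_{k-1}(z)-I_{k+1}(z))$ telescopes each piece to a combination of $T_\pm=\sum I_m x^{\pm m/2}$ and the boundary terms $I_0,\sqrt{x}\,I_1$; the Jacobi--Anger identity at $t=\sqrt{x}$ gives $T_++T_-=e^{x+1}$, while the definition of $Q_1$ gives $T_+=e^{x+1}Q_1(\sqrt{2x},\sqrt{2})$, and the claimed formula drops out. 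Your route is arguably more elementary---it avoids writing and differentiating an MGF whose parameters are nested inside $Q_1$---at the price of a bit of careful index bookkeeping in the telescoping step you flag. The paper's MGF route, on the other hand, packages the two Marcum~$Q$ identities into a single object, so the bookkeeping is hidden in the derivative; both arrive at the identical expression for $\mathbb{E}|K|$.
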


Thus, in terms of our overview we have $f_c(x) = h(c(1-x))$, because we hope to show that 

\[\mathbb{E}(\min(\Bin(n-Y_i, c/n), \Poi(1)) \mid Y_i) \approx \mathbb{E}(\min(\Poi(c(1-Y_i/n)), \Poi(1))\mid Y_i) = h(c(1-Y_i/n)).\]

To apply the method of differential equations we need to analyze the above approximation and show that $f_c$ is Lipschitz. We start by analyzing how good the approximation is.

\begin{lemma}
\label{lem:key_lem_1}
\[\lim_{n\rightarrow \infty} \bigl| \mathbb{E}(\min(\Bin(n-Y_i, c/n), \Poi(1)) \mid Y_i) - h(c(1-Y_i/n)) \bigr| = 0.\]
\end{lemma}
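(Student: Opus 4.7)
The plan is to reduce the statement to a quantitative version of the Poisson limit theorem. First I would condition on $Y_i$ so that $m := n - Y_i$ is a deterministic integer in $\{0, 1, \ldots, n\}$; the conditional expectation on the left then equals $\mathbb{E}(\min(X_n, Z))$ with $X_n \sim \Bin(m, c/n)$ and $Z \sim \Poi(1)$ independent, while $h(c(1-Y_i/n)) = \mathbb{E}(\min(X_\infty, Z))$ with $X_\infty \sim \Poi(mc/n)$ and the same independent $Z$. The lemma is thus equivalent to showing that $|\mathbb{E}\min(X_n, Z) - \mathbb{E}\min(X_\infty, Z)| \to 0$ as $n \to \infty$, and the proof will in fact give a bound that is uniform in $m \in \{0,1,\ldots,n\}$, and hence uniform in the value of $Y_i$.

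The main technical tool I would use is the elementary tail-sum identity
\[
\mathbb{E}\min(X, Z) \;=\; \sum_{j = 1}^{\infty} \Pr(X \ge j)\Pr(Z \ge j),
\]
which holds for any independent non-negative integer random variables $X, Z$ (obtained by writing $\min(X,Z) = \sum_{j \ge 1} \mathbb{I}(X \ge j)\mathbb{I}(Z \ge j)$ and taking expectation). Applied to both expectations above, the difference becomes $\sum_{j \ge 1} [\Pr(X_n \ge j) - \Pr(X_\infty \ge j)]\Pr(Z \ge j)$. For each $j$ the tail-probability difference is bounded in absolute value by the total variation distance $d_{TV}(X_n, X_\infty)$, and Le~Cam's classical Poisson-approximation theorem gives $d_{TV}(\Bin(m, c/n), \Poi(mc/n)) \le m(c/n)^2 \le c^2/n$. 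Combining these bounds with $\sum_{j \ge 1}\Pr(Z \ge j) = \mathbb{E}Z = 1$ yields $|\mathbb{E}\min(X_n, Z) - \mathbb{E}\min(X_\infty, Z)| \le c^2/n$, which is explicit, uniform in $m$, and tends to $0$.

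I do not expect a serious obstacle. The one delicate point is that a direct comparison $|\mathbb{E}f(X_n) - \mathbb{E}f(X_\infty)| \le \|f\|_\infty \cdot d_{TV}(X_n, X_\infty)$ requires $f$ bounded, whereas here $f(x) = \min(x, z)$ is unbounded in $x$. The tail-sum identity sidesteps this issue by converting the expectation into a weighted sum of event probabilities (for which total variation \emph{is} the right control), with weights $\Pr(Z \ge j)$ that are summable precisely because $Z \sim \Poi(1)$ has finite mean. An alternative route, giving the weaker rate $O(1/\sqrt{n})$, is to construct a maximal coupling of $X_n$ and $X_\infty$ with $\Pr(X_n \neq X_\infty) \le c^2/n$ and then apply Cauchy--Schwarz to $\mathbb{E}|X_n - X_\infty|$ using the uniform second-moment bound $\mathbb{E}(X_n^2), \mathbb{E}(X_\infty^2) = O(c + c^2)$; this works but is strictly more wasteful than the tail-sum argument.
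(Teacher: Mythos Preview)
Your proposal is correct and takes a genuinely different, more elementary route than the paper. The paper conditions on $Y_i$ as you do, but then expands $\mathbb{E}(\min(X_i,Z_i)-\min(W_i,Z_i))$ pointwise over all triples $(j,k,\ell)$ of values of $(W_i,X_i,Z_i)$, groups the contributions into four cases according to the relative order of $j,k,\ell$, and reduces each resulting sum to a weighted series in $|b_k-p_k(\lambda)|$ whose vanishing is handled by invoking a convergence criterion of Simons~(1971). Your tail-sum identity $\mathbb{E}\min(X,Z)=\sum_{j\ge 1}\Pr(X\ge j)\Pr(Z\ge j)$ bypasses that case analysis entirely: it turns the difference into a single weighted sum of event-probability discrepancies, each controlled by $d_{TV}(\Bin(m,c/n),\Poi(mc/n))$, with weights $\Pr(Z\ge j)$ summing to $\mathbb{E}Z=1$. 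Combined with Le~Cam's bound $d_{TV}\le m(c/n)^2\le c^2/n$, this gives an explicit uniform rate $c^2/n$, strictly stronger than the paper's qualitative $o(1)$ and free of the external reference. You also correctly anticipated the one subtlety---that $\min(\cdot,z)$ is unbounded so a naive $\|f\|_\infty\cdot d_{TV}$ bound fails---and explained how the tail-sum formulation sidesteps it via the summability of the Poisson tails. The paper's approach is the direct computational one; yours is the cleaner probabilistic one and would in fact simplify the paper.
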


\begin{proof}
We introduce the following useful notation: $N = n - Y_i$, $p = c/n$, and $\lambda = c(1-Y_i/n)$. Moreover, we let $b_r = \Pr(\Bin(N,p)=r)$ and $p_r(a) = \Pr(\Poi(a) = r)$. Let $W_i \sim \Poi(c(1-Y_i/n))$. Then the statement of the lemma can be translated into
\[\bigl| \mathbb{E}(\min(X_i, Z_i) - \min(W_i, Z_i)\mid Y_i)\bigr|\]
The expectation is just a big sum. Let's consider individual terms  and their contributions to the overall sum.
\begin{enumerate}
\item The contribution of $X_i = k, W_i = j, Z_i = \ell$ when $j < k < \ell$ is $k-j$.
\item The contribution of $X_i = j, W_i = k, Z_i = \ell$ when $j < k < \ell$ is $j-k = -(k-j)$.
\item The contribution of $X_i = k, W_i = j, Z_i = \ell$ when $j < \ell$ and $k \ge \ell$ is $\ell - j$.
\item The contribution of $X_i = j, W_i = k, Z_i = \ell$ when $j < \ell$ and $k \ge \ell$ is $j - \ell$.
\item The contribution when $j = k$ is $0$.
\end{enumerate} 

We pair up terms corresponding to (1) with (2) and terms corresponding to (3) with (4). Define
\[ S_1 = \sum_{k,j,\ell = 0}^\infty \mathbb{I}(j < k < \ell) \left( (k-j) b_k p_j(\lambda) p_\ell(1) + (j-k) b_j p_k(\lambda) p_\ell(1) \right),\]
and 
\[ S_2 = \sum_{k,j,\ell = 0}^\infty \mathbb{I}(j < \ell, k \ge \ell) \left( (\ell-j)b_k p_j(\lambda)p_\ell(1) + (j-\ell)b_j p_k(\lambda)p_\ell(1)\right). \]

Thus, we get that
\[\bigl| \mathbb{E}(\min(X_i, Z_i) - \min(W_i, Z_i)\mid Y_i)\bigr| = |S_1 + S_2| \le |S_1| + |S_2|.\]

We will show that $\lim_{n \rightarrow \infty} |S_1| = 0$. Similar argument implies that $\lim_{n \rightarrow \infty} |S_2| = 0.$

We have
\begin{align*}
    S_1 &= \sum_{k,j,\ell = 0}^\infty \mathbb{I}(j < k < \ell) p_\ell(1) (k-j) (b_k p_j(\lambda) - b_j p_k(\lambda))\\
    &= \sum_{k,j,\ell = 0} ^\infty \mathbb{I}(j<k<\ell) p_\ell(1) (k-j) ((b_k-p_k(\lambda))p_j(\lambda) - (b_j - p_j(\lambda))p_k(\lambda))\\
    &= S_1^1 + S_1^2,
\end{align*} 
where $S_1^1 = \sum_{k,j,\ell = 0} ^\infty \mathbb{I}(j<k<\ell) p_\ell(1) (k-j) (b_k-p_k(\lambda))p_j(\lambda)$ and $S_1^2 = S_1 - S_1^1$. Again, $|S_1| \le |S_1^1| + |S_1^2|$. We will show that $\lim_{n \rightarrow \infty} |S_1^1| = 0$, and a similar argument implies that the same holds for $|S_1^2|$. We have

\[|S_1^1| \le \sum_{k = 0}^\infty \left( \sum_{j,\ell = 0}^\infty \mathbb{I}(j<k<\ell)  (k-j) p_\ell(1) p_j(\lambda) \right)|b_k-p_k(\lambda)|.\]

By \cite{Simons1971}, we have $\lim_{n\rightarrow \infty}|S_1^1| = 0$ if and only if

\[\sum_{k,j,\ell = 0}^\infty \mathbb{I}(j<k<\ell)  (k-j) p_\ell(1) p_j(\lambda) p_k(\lambda) = O(1).\]
Lastly, we have

\begin{align*}
   & \sum_{k,j,\ell = 0}^\infty \mathbb{I}(j<k<\ell)  (k-j) p_\ell(1) p_j(\lambda) p_k(\lambda) \le \sum_{k,j,\ell = 0}^\infty \mathbb{I}(j<k<\ell) k p_\ell(1) p_j(\lambda) p_k(\lambda)\\
   &\le \sum_{k,j,\ell = 0}^\infty p_\ell(1) p_j(\lambda) k p_k(\lambda) \le \lambda \xrightarrow[n \to \infty]{} c =  O(1).
\end{align*}

\end{proof}

Due to space considerations we prove that $f_c$ is Lipschitz in Appendix~\ref{sec:tech_lemmas}.

\begin{lemma}
\label{lem:key_lem_2}
The function $f_c(x)$ is Lipschitz on $[0,1]$.
\end{lemma}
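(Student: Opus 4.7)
The plan is to reduce the problem to showing that $h$ itself is $1$-Lipschitz on $[0, \infty)$, and then to obtain the Lipschitzness of $f_c(x) = h(c(1-x))$ on $[0,1]$ by precomposing with the affine map $x \mapsto c(1-x)$. The key observation is that a coupling of independent Poissons gives a clean derivative-free bound that sidesteps the complicated closed-form expression in Lemma~\ref{lem:h_expr}.

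Concretely, I would fix $0 \le x \le y$ and use the additivity of independent Poissons to construct, on a common probability space, $P_x \sim \Poi(x)$, $W \sim \Poi(y-x)$, and $P_1 \sim \Poi(1)$, mutually independent, and set $P_y := P_x + W$, so that $P_y \sim \Poi(y)$. Since $\min(\cdot, P_1)$ is non-decreasing, $\min(P_y, P_1) \ge \min(P_x, P_1)$, and a short case analysis on $a+w \le b$; $a \le b < a+w$; and $a > b$ yields the pointwise inequality
\[\min(a + w, b) - \min(a, b) \le w \quad \text{for all } a, b, w \in \mathbb{Z}_{\ge 0}.\]
Substituting $a = P_x$, $b = P_1$, $w = W$ and taking expectations gives $0 \le h(y) - h(x) \le \mathbb{E}(W) = y - x$, so $h$ is $1$-Lipschitz on $[0, \infty)$.

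With the Lipschitzness of $h$ in hand, the remainder is routine. For any $x, y \in [0,1]$ the arguments $c(1-x), c(1-y)$ lie in $[0, c]$, and
\[|f_c(x) - f_c(y)| = |h(c(1-x)) - h(c(1-y))| \le |c(1-x) - c(1-y)| = c\,|x - y|,\]
so $f_c$ is $c$-Lipschitz on $[0,1]$, which suffices since $c$ is a fixed constant. There is no serious obstacle; the only point requiring care is the one-line integer inequality above. One could alternatively try to differentiate the Bessel-function representation of $h$ from Lemma~\ref{lem:h_expr}, but that route is substantially messier, which is why I prefer the coupling argument.
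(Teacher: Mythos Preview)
Your proposal is correct and takes a genuinely different, more elementary route than the paper. The paper proves the lemma by differentiating the explicit Bessel/Marcum representation of $h$ from Lemma~\ref{lem:h_expr}: it computes $2h'(x) = e^{-1-x}\bigl(I_0(2\sqrt{x}) - I_1(2\sqrt{x})/\sqrt{x} - I_2(2\sqrt{x})\bigr) + 2 - 2Q_1(\sqrt{2x},\sqrt{2})$, then bounds the Bessel combination by $e^x$ via its power series and bounds the Marcum term using $Q_1 \in [0,1]$, concluding $|f_c'| = O(1)$ on $[0,1]$. Your coupling argument bypasses all of the special-function machinery: writing $\Poi(y) \stackrel{d}{=} \Poi(x) + \Poi(y-x)$ and applying the pointwise inequality $\min(a+w,b) - \min(a,b) \le w$ gives immediately that $h$ is non-decreasing and $1$-Lipschitz, hence $f_c$ is $c$-Lipschitz. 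This is shorter and even yields a sharper explicit Lipschitz constant than the paper's bound of roughly $(e^{-1}+2)/2$ for $h'$; the paper's approach, by contrast, actually computes $h'$ and so extracts a bit more structural information about $h$, but none of that extra information is used downstream in the application of Wormald's theorem.
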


Finally, we have all the necessary ingredients to prove the main theorem of this section. Although this theorem does not give an explicit closed-form expression for $\mu(\textsc{Greedy}, \RTPAM(n,c))$, it gives a simple way to evaluate it numerically for any value of $c > 0$.

\begin{theorem}
\label{thm:greedy_lb}
\[ \mu(\textsc{Greedy}, \RTPAM(n,c)) = g_c(1), \]
where $g_c$ is a solution to the following differential equation:
\begin{align*}
    g_c'(t) &= h(c(1-g_c(t))), \\
    g_c(0) &= 0, 
\end{align*}
and $h$ is given in Lemma~\ref{lem:h_expr}.
\end{theorem}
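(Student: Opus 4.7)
The plan is to apply Wormald's differential equation method~\cite{Wormald99,Wormald1995} to the Markov chain $(Y_i)_{i=0}^n$, identifying its scaling limit with the ODE solution $g_c$. Since by definition $\mu(\textsc{Greedy},\RTPAM(n,c)) = \lim_{n\to\infty}\mathbb{E}(Y_n)/n$ and since $0 \le Y_n \le n$ deterministically, it suffices to prove $Y_n/n \to g_c(1)$ in probability; bounded convergence then upgrades this to the desired convergence in expectation.

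Three of the four ingredients of the method are already in hand. The \emph{trend hypothesis}, $\mathbb{E}(Y_{i+1}-Y_i \mid Y_i) = f_c(Y_i/n) + o(1)$ with $f_c(x) := h(c(1-x))$, is Lemma~\ref{lem:key_lem_1} combined with Equation~(\ref{eq:main_eq}). The \emph{Lipschitz condition} on $f_c$ is Lemma~\ref{lem:key_lem_2}; this also yields existence and uniqueness of $g_c$ via Picard-Lindel\"of, with the initial condition $g_c(0)=0=Y_0/n$. Finally, the process is deterministically confined to $[0,n]$, so boundaries are not an issue. Given these, the usual martingale argument inside Wormald's proof would produce $Y_n/n = g_c(1) + o(1)$ with high probability.

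The main obstacle, and the reason one cannot directly quote the off-the-shelf statement of the theorem, is the \emph{one-step boundedness} hypothesis. Standard versions require $|Y_{i+1}-Y_i| \le \beta$ for a uniform constant $\beta$, whereas here $Y_{i+1}-Y_i = \min(X_i, Z_i) \le Z_i \sim \Poi(1)$, so the increments are only stochastically, not deterministically, bounded. I would resolve this by coupling $(Y_i)$ to a truncated process $(\widetilde{Y}_i)$ in which each $Z_i$ is replaced by $\min(Z_i, \tau_n)$ for a slowly growing threshold $\tau_n = \log^2 n$. Poisson tail bounds together with a union bound over $i \le n$ give $\Pr(\widetilde{Y}_n \ne Y_n) = o(1)$, so the two processes agree with high probability; the effect of truncation on the trend is an additive $o(1)$ term that is absorbed by Lemma~\ref{lem:key_lem_1}. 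The truncated process has increments bounded by $\tau_n$, and the Azuma-Hoeffding step underlying Wormald's proof yields deviation $O(\sqrt{n}\,\tau_n) = o(n)$, which is still small enough to give $\widetilde{Y}_n/n = g_c(1) + o(1)$ with high probability.

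Combining these pieces yields $Y_n/n \to g_c(1)$ in probability, and hence $\mathbb{E}(Y_n)/n \to g_c(1)$, which is exactly the claim $\mu(\textsc{Greedy},\RTPAM(n,c)) = g_c(1)$. The only nontrivial work beyond the already-established Lemmas~\ref{lem:key_lem_1} and~\ref{lem:key_lem_2} is the truncation coupling needed to meet the boundedness requirement of Wormald's theorem; everything else is a direct application.
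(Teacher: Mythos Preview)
Your approach is correct and is essentially the paper's own proof, which is simply a one-line invocation of Wormald's Theorem~5.1 in~\cite{Wormald99} using Lemmas~\ref{lem:key_lem_1} and~\ref{lem:key_lem_2}. The extra truncation step you propose is not actually needed: Theorem~5.1 of~\cite{Wormald99} already states its boundedness hypothesis probabilistically (increments bounded by $\beta(n)$ with failure probability at most $\gamma(n)$), and the Poisson tail bound $\Pr(Z_i > \log^2 n) = n^{-\omega(1)}$ directly supplies admissible $\beta$ and $\gamma$, so the application really is ``direct.''
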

\begin{proof}
This is a direct application of Wormald's theorem (Theorem 5.1 in~\cite{Wormald99}) using Lemmas~\ref{lem:key_lem_1} and~\ref{lem:key_lem_2}.
\end{proof}

Next, we show two upper bounds on $\mathbb{E}(\OPT(\RTPAM(n,c)))$. The minimum of the two will be used to compute the approximation ratio of \textsc{Greedy}.

\begin{theorem}
\label{thm:match_ub}
    For all $c \in \mathbb{R}_{>0}$ we have
   \[  \mathbb{E}(\OPT(\RTPAM(n,c))) \le \min\left(n c \left(1 - \frac{1}{e} \right), \left(2 - \frac{\gamma^* + \gamma_* + \gamma^* \gamma_*}{c}\right) n + o(n) \right),\]
   where $\gamma_*$ is the smallest solution to the equation $x = c\exp(-c \exp(-x))$ and $\gamma^*=c \exp(-\gamma_*)$.
\end{theorem}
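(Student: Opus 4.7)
This essentially recycles the computation already appearing in the proof of Theorem~\ref{thm:little_c_regime}. Letting $Q_i$ denote the (random) degree of type $i$ in the type graph, the copies of type $i$ contribute at most $\min(Q_i, Z_i) \le Q_i \,\mathbb{I}(Z_i \ge 1)$ to any matching of the instance graph, so $\OPT(\RTPAM(n,c)) \le \sum_i Q_i\,\mathbb{I}(Z_i \ge 1)$. Since $Q_i \sim \Bin(n,c/n)$ is independent of $Z_i \sim \Poi(1)$, each summand has expectation $c(1-1/e)$, and summing over $i$ gives $nc(1-1/e)$.

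\textbf{Second bound.} My plan is to use K\"onig's theorem to pass from matchings to vertex covers, and then to exploit the independence of the $Z_i$'s from the type graph $\widehat{G}$. Working in the two-step view, I would fix a measurable selection $\widehat{G} \mapsto (A(\widehat{G}), B(\widehat{G}))$ of a minimum vertex cover of the type graph, with $A \subseteq U$, $B \subseteq V$, and $|A|+|B|=\OPT(\widehat{G})$ (ties broken lexicographically, so the choice depends on $\widehat{G}$ only). I would then lift this cover to the instance graph by keeping $A$ on the offline side and throwing \emph{every} copy of \emph{every} type $i \in B$ into the cover. Each edge of the instance graph projects onto an edge of $\widehat{G}$ covered by $(A,B)$, so this is a valid instance-graph vertex cover, of size $|A| + \sum_{i \in B} Z_i$.

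Applying K\"onig's theorem inside the instance graph and then conditioning on $\widehat{G}$ gives
\[\mathbb{E}[\OPT(\RTPAM(n,c)) \mid \widehat{G}] \le |A(\widehat{G})| + \sum_{i \in B(\widehat{G})} \mathbb{E}[Z_i] = |A(\widehat{G})| + |B(\widehat{G})| = \OPT(\widehat{G}),\]
since the $Z_i$'s are independent of $\widehat{G}$ and have mean one. Taking outer expectation and invoking Theorem~\ref{thm:match_gnnp_ub} for $\widehat{G} \sim G_{n,n,c/n}$ delivers the second bound. The main pitfall to guard against is exactly in this last step: if $B$ were allowed to depend on $Z$ (e.g., if one tried to read off a type-graph cover from an instance-graph-optimal vertex cover), the cancellation $\mathbb{E}[Z_i \mid i \in B] = 1$ could break; keeping $B$ a function of $\widehat{G}$ alone is what makes the argument go through.
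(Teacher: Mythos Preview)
Your proof is correct and essentially follows the paper's approach: the first bound is identical, and for the second bound both you and the paper invoke K\"onig's theorem and lift a witness from the type graph to the instance graph, using that the $Z_i$ are independent of $\widehat{G}$ with $\mathbb{E}[Z_i]=1$. The only cosmetic difference is that the paper works with independent sets (showing $\mathbb{E}[\alpha(\RTPAM(n,c))] \ge \mathbb{E}[\alpha(G_{n,n,c/n})]$) whereas you work with the complementary notion of vertex covers; your remark about keeping $B$ a function of $\widehat{G}$ alone is exactly the care the paper takes in fixing $S$ before revealing the $Z_j$.
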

\begin{proof}
The first argument in the minimum follows from the proof of Theorem~\ref{thm:little_c_regime}. Let $Q_j$ denote the number of neighbors of a node of type $j$. Observe that the number of nodes of type $j$ participating in any matching is bounded above by $Q_j \mathbb{I}(Z_j \ge 1)$. Using the fact that $Q_j$ and $\mathbb{I}(Z_j \ge 1)$ are independent, taking the expectation and summing over all $j$ results in the upper bound of $nc\left(1 - \frac{1}{e}\right)$.

The second argument in the minimum follows from the observation 
\begin{equation}
\label{eq:match_rpam_vs_gnnp}
    \mathbb{E}(\OPT(\RTPAM(n,c))) \le \mathbb{E}(\OPT(G_{n,n,c/n}))
\end{equation} 
and Theorem~\ref{thm:match_gnnp_ub}. Let $\alpha(G)$ denote the independence number of the graph. By K\H{o}nig's theorem it suffices to prove that
\[ \mathbb{E}(\alpha(\RTPAM(n,c))) \ge \mathbb{E}(\alpha(G_{n,n,c/n})).\]
Consider the two-step view of $\RTPAM(n,c)$. In the first step, a type graph $\widehat{G} = (U,V,E)$ is generated from the distribution $G_{n,n,c/n}$. Let $S$ be a largest independent set in the type graph. Write $S = S_1 \cup S_2$, where $S_1 = S \cap U$ consists of offline nodes and $S_2 = S \cap V$ consists of online types. In the \emph{instance} graph all nodes from $S_1$ together with all nodes with types from $S_2$ will form an independent set. In other words, even if a node of a given type is repeated multiple times from $S_2$, it can be safely included in an independent set. Thus, we have 
\begin{align*}
    \mathbb{E}(\alpha(\RTPAM(n,c)) \mid S_1, S_2) &\ge \mathbb{E}\left(|S_1|+\sum_{j \in S_2} Z_j \mid S_1, S_2\right)\\
    &= |S_1| + \sum_{j \in S_2} \mathbb{E}(Z_j \mid S_1, S_2) = |S_1| + |S_2| = |S|,
\end{align*} 
where the last equality follows because $Z_j$ is independent of $S_1, S_2$. Taking the expectation over $S$ proves $\mathbb{E}(\alpha(\RTPAM(n,c))) \ge \mathbb{E}(\alpha(G_{n,n,c/n}))$, since $|S|$ has the same distribution as $\alpha(G_{n,n,c/n})$.
\end{proof}

\subsection{The Regime of $c = \omega(1)$}

In this section we show that \textsc{Greedy} matches almost all offline nodes in $\RTPAM(n,c)$ model when $c = \omega(1)$. Consider the two-step view of $\RTPAM(n,c)$. Recall that $Z_j$ refers to the number of nodes of type $j$ that are generated, and that $X_j$ refers to the number of neighbors of a node of type $j$ that have not been matched in any of the previous rounds. Also, recall that \textsc{Greedy} matches $\mathbb{E}(\min(X_j, Z_j))$ in round $j$ in expectation. We will show that in most rounds $\min(X_j, Z_j)$ is very close to $Z_j$. This finishes the argument, since $\sum_{j}\mathbb{E}(Z_j) = n$ is the total number of offline nodes and a trivial upper bound on the size of a maximum matching.

\begin{theorem}
\label{thm:omega_1_c_regime}
Let $c = \omega(1)$ then we have:
\[\rho (\textsc{Greedy}, \RTPAM(n,c)) = 1.\]
\end{theorem}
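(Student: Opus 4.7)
The plan is to prove $\mathbb{E}(\textsc{Greedy}(\RTPAM(n,c))) = n - o(n)$, which together with the trivial bound $\mathbb{E}(\OPT(\RTPAM(n,c))) \le n$ yields $\rho = 1$. I would work in the one-step view and write $\textsc{Greedy}(\RTPAM(n,c)) = \sum_{j=1}^n \min(X_j, Z_j)$. Using the identity $\min(X_j, Z_j) = Z_j - (Z_j - X_j)^+$ together with $\sum_{j=1}^n \mathbb{E}(Z_j) = n$, the task reduces to controlling the total deficit $L := \sum_{j=1}^n (Z_j - X_j)^+$.

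A direct bound on $\mathbb{E}(L)$ does not suffice: for early rounds (those with $Y_j$ far from $n$) we have $X_j \sim \Bin(n-Y_j, c/n)$ with mean $\omega(1)$, so the per-round deficit is negligible, but once $Y_j$ approaches $n$ the only obvious bound is the trivial $\mathbb{E}((Z_j-X_j)^+ \mid Y_j) \le \mathbb{E}(Z_j) = 1$, and there can be $\Theta(n)$ such rounds. I circumvent this by comparing against $n(1-\delta)$ rather than $n$: fix $\delta = \delta(c)$ with $\delta \to 0$ and $c\delta \to \infty$ (say $\delta = c^{-1/2}$), and define $\tau := \min\{j : Y_j \ge n(1-\delta)\}$. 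The deterministic inequality
\[\textsc{Greedy}(\RTPAM(n,c)) \ge n(1-\delta) - \bigl(n(1-\delta) - \textsc{Greedy}(\RTPAM(n,c))\bigr)^+\]
holds, and on $\{\tau \le n+1\}$ the positive part vanishes because $Y_{n+1} \ge Y_\tau \ge n(1-\delta)$.

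On the complementary event $\{\tau > n+1\}$ every round satisfies $Y_j < n(1-\delta)$, so $\mathbb{E}(X_j \mid Y_j) > c\delta \to \infty$. For such $Y_j$ I would prove the pointwise bound
\[\mathbb{E}((Z_j - X_j)^+ \mid Y_j) \le \Pr(X_j < c\delta/2 \mid Y_j) + \mathbb{E}((Z_j - c\delta/2)^+) =: \epsilon(c),\]
estimating the first term by a Chernoff bound on $X_j$ (giving $e^{-\Omega(c\delta)}$) and the second by the super-exponential decay of the $\Poi(1)$ tails; both tend to zero as $c \to \infty$. Setting $A := \sum_j Z_j \sim \Poi(n)$, we have $\textsc{Greedy}(\RTPAM(n,c)) = A - L$, and hence on $\{\tau > n+1\}$,
\[n(1-\delta) - \textsc{Greedy}(\RTPAM(n,c)) = n(1-\delta) - A + L \le (n-A)^+ + L.\]
Taking expectations with $\mathbb{E}((n-A)^+) \le \sqrt{\Var(A)} = \sqrt{n}$ and $\mathbb{E}(L\,\mathbb{I}(\tau > n+1)) \le \sum_j \mathbb{E}((Z_j - X_j)^+\,\mathbb{I}(Y_j < n(1-\delta))) \le n\epsilon(c)$ yields $\mathbb{E}\bigl((n(1-\delta) - \textsc{Greedy}(\RTPAM(n,c)))^+\bigr) \le \sqrt{n} + n\epsilon(c) = o(n)$, so $\mathbb{E}(\textsc{Greedy}(\RTPAM(n,c))) \ge n(1 - \delta - \epsilon(c) - n^{-1/2}) = n(1 - o(1))$.

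The main obstacle is recognizing that one must compare against $n(1-\delta)$ rather than $n$: the slack $n\delta$ absorbs the late-regime unmatched offline nodes, which would otherwise swamp a naive bound on $\mathbb{E}(L)$, while the early-regime concentration of the large binomial $X_j$ handles what remains. Once this decomposition is in place, the remaining ingredients — a Chernoff estimate for $X_j$, a tail bound for $\Poi(1)$, and the Poisson concentration $\Var(A) = n$ — are all routine.
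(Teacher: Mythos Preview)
Your proof is correct and follows essentially the same strategy as the paper's: both fix the threshold $n\delta$ with $\delta = c^{-1/2}$ of remaining offline nodes, argue via concentration that while at least that many offline nodes are unmatched the per-round deficit $(Z_j-X_j)^+$ is $o(1)$ (you use Chernoff, the paper uses Chebyshev), and absorb the final $o(n)$ rounds into the slack. The packaging differs slightly: the paper fixes the first $n-10k$ rounds and argues that $\sum_{i\le n-10k} Z_i \le n-k$ with high probability so that at least $k$ offline nodes remain throughout, whereas you introduce the stopping time $\tau$ and the deficit identity $\min(X,Z)=Z-(Z-X)^+$, handling the Poisson fluctuation of $A=\sum Z_j$ explicitly via $\mathbb{E}((n-A)^+)\le\sqrt{n}$. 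Your organization is arguably cleaner and makes the role of each error term more transparent, but the underlying idea is the same.
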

\begin{proof}
Let $p=\frac{c}{n}$ and $k=\frac{n}{\sqrt{c}}$. Fix a round $i$ and assume that at least $k$ offline nodes have \emph{not} been matched in earlier rounds. Then variable $X_i$ has binomial distribution with at least $k$ trials and the probability of success $c/n$. We will consider $\widetilde{X}_i \sim \Bin(k, c/n)$ such that $\widetilde{X}_i \le X_i$. We will show that $\Pr(\widetilde{X}_i \ge Z_i) = 1 -o(1)$. Since $Z_i \sim \Poi(1)$ we have:
\[P(Z_i \geq c^{1/100}) = \frac{1}{e}\sum_{j=c^{1/100}}^\infty \frac{1}{j!} \leq \frac{1}{c^{1/100}!} \leq \frac{1}{2^{c^{1/100}}}. \]
For $\widetilde{X}_i$ we have $\Var(\widetilde{X}_i) = k p (1-p) = \sqrt{c}(1-c/n)$ and $\mathbb{E}(\widetilde{X}_i) = \sqrt{c}$. By Chebyshev's inequality
\[\Pr(|\widetilde{X}_i - \sqrt{c}| \ge c^{1/3}) \le \frac{\sqrt{c}(1-c/n)}{c^{2/3}} = \frac{1-c/n}{c^{1/6}}.\]
From these two bounds, it follows that
\[\Pr(\widetilde{X}_i\geq Z_i)\ge \Pr(Y_i \le c^{1/100}  \wedge  \widetilde{X}_i\ge \sqrt{c}-c^{1/3} )
	\ge 1-\frac{1-c/n}{c^{1/6}}-\frac{1}{2^{c^{1/100}}} = 1 - o(1).\]
In addition, it is easy to see that in the first $n-10k$ rounds \textsc{Greedy} matches at most $n-k$ offline nodes with probability $1-o(1)$. In particular, the probability of matching more than that is bounded by the probability that $\sum_{i=1}^{n-10k} Z_i > n-k$. Thus, we can condition on having at least $k$ available offline nodes during each of the first $n-10 k$ rounds. Therefore, the expected size of the matching constructed by \textsc{Greedy} is at least $\sum_{i=1}^{n-10k}\mathbb{E}(\min(Z_i, X_i)) \ge (n-10k)(1-o(1)) = n - o(n).$
	
	
\end{proof}

\subsection{Putting it together}

In this section, we take a closer look at our results for \textsc{Greedy} in $\RTPAM(n,c)$ model. We already know that \textsc{Greedy} achieves competitive ratio $1$ in the regimes $c = o(1)$ and $c = \omega(1)$. Hence, we concentrate on the regime of constant $c$. In Figure~\ref{fig:both-on-one} we plot the asymptotic fraction of matched offline nodes by \textsc{Greedy} (Theorem~\ref{thm:greedy_lb}) and the upper bound on the fraction of offline nodes in a maximum matching (Theorem~\ref{thm:match_ub}) as functions of $c$ .

By taking the ratio of the two curves in Figure~\ref{fig:both-on-one} we obtain a lower bound on the asymptotic approximation ratio of \textsc{Greedy} in the $\RTPAM(n,c)$ model as a function of $c$. We plot this lower bound in Figure~\ref{fig:approx_ratio}. We see that the lower bound achieves a unique minimum on the interval $(0, \infty)$ and that it converges to $1$ as $c$ goes to infinity. By numerically minimizing the lower bound we obtain that the minimum of this curve is achieved at $c \approx 0.667766$ and the lower bound is $\approx 0.715071$. Thus, we have the following corollary:

\begin{corollary}
For all regimes of $c$ we have
\[ \rho(\textsc{Greedy}, \RTPAM(n,c)) \ge 0.715.\]
\end{corollary}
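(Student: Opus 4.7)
The plan is to split the analysis by regime of $c$ and combine the three pieces into a single uniform lower bound. For the two extreme regimes, Theorems \ref{thm:little_c_regime} and \ref{thm:omega_1_c_regime} already give $\rho(\textsc{Greedy}, \RTPAM(n,c)) = 1$, which trivially exceeds $0.715$, so only the constant-$c$ regime is nontrivial.

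For constant $c > 0$, I would combine the asymptotic fraction matched by \textsc{Greedy} (Theorem \ref{thm:greedy_lb}) with the upper bound on $\mathbb{E}(\OPT(\RTPAM(n,c)))/n$ (Theorem \ref{thm:match_ub}). Concretely, set
\[
R(c) := \frac{g_c(1)}{\min\bigl(c(1-1/e),\ 2 - (\gamma^* + \gamma_* + \gamma^* \gamma_*)/c\bigr)},
\]
where $g_c$ is the solution of the ODE of Theorem \ref{thm:greedy_lb} and $\gamma_*, \gamma^*$ are as in Theorem \ref{thm:match_gnnp_ub}. Since the additive $o(n)$ term in the upper bound on $\mathbb{E}(\OPT)$ is negligible after dividing by $n$ and taking $\liminf$, this yields $\rho(\textsc{Greedy}, \RTPAM(n,c)) \ge R(c)$ for every constant $c$, so it suffices to show $\inf_{c \in (0,\infty)} R(c) \ge 0.715$.

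Next I would argue that the infimum is attained on a compact interval. The function $R$ is continuous in $c$, and the bounds for the extremal regimes force $\lim_{c \to 0^+} R(c) = \lim_{c \to \infty} R(c) = 1$, so the infimum is attained at some interior $c^* \in (0, \infty)$ and it suffices to search a bounded subinterval $[c_1, c_2]$. Inside this interval I would evaluate $R$ on a grid of $c$ values: for each sampled $c$, first solve the fixed-point equation $x = c\exp(-c\exp(-x))$ for $\gamma_*$ and set $\gamma^* = c\exp(-\gamma_*)$ to obtain the denominator; then evaluate $h$ via the Bessel/Marcum-$Q$ expression of Lemma \ref{lem:h_expr} and numerically integrate $g_c'(t) = h(c(1 - g_c(t)))$ with $g_c(0) = 0$ from $t=0$ to $t=1$ to obtain $g_c(1)$. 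This traces out the curve in Figure \ref{fig:approx_ratio}, whose numerical minimum sits at $c \approx 0.667766$ with value $\approx 0.715071$.

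The main obstacle is converting a sampled-grid minimum into a rigorous lower bound of $0.715$: a priori, $R$ could dip below $0.715$ between the sampled values. I would handle this via a Lipschitz argument. The denominator is piecewise smooth in $c$ (with a single crossover between the two arguments of the $\min$, handled separately), and $g_c(1)$ depends smoothly on $c$ because $h$ is smooth and $g_c$ is obtained by integrating a parametric family of Lipschitz vector fields (Lemma \ref{lem:key_lem_2}) from a fixed initial condition. Thus a uniform Lipschitz constant $L$ for $R$ on $[c_1, c_2]$ can be extracted; sampling $R$ at spacing strictly less than $(0.715071 - 0.715)/L$ then certifies $R(c) \ge 0.715$ uniformly on $[c_1, c_2]$ and, together with the boundary behavior, on all of $(0, \infty)$, completing the proof of the corollary.
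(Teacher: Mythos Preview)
Your proposal is correct and follows essentially the same approach as the paper: dispose of the regimes $c=o(1)$ and $c=\omega(1)$ via Theorems~\ref{thm:little_c_regime} and~\ref{thm:omega_1_c_regime}, and for constant $c$ take the ratio of $g_c(1)$ from Theorem~\ref{thm:greedy_lb} to the upper bound of Theorem~\ref{thm:match_ub}, then numerically minimize over $c$ to obtain the value $\approx 0.715071$ at $c\approx 0.667766$. The paper in fact stops at the numerical minimization and simply asserts the corollary, whereas you go further by sketching a Lipschitz-based certification of the grid minimum; so your argument is, if anything, more careful than the paper's own.
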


The shape of the lower bound graph in Figure~\ref{fig:approx_ratio} is a bit strange and it suggests that our lower bound might not be tight. Therefore, we conjecture that it should be possible to strengthen the upper bound on the size of a maximum matching in $\RTPAM(n,c)$.

It is also interesting to compare the performance of \textsc{Greedy} on $\RTPAM(n,c)$ inputs with its performance on $G_{n,n,c/n}$ inputs. As stated in the introduction, we expect \textsc{Greedy} to perform worse in the $\RTPAM(n,c)$ model, because the $\RTPAM(n,c)$ model has ``less randomness'' in the sense of introducing correlations between consecutive online nodes that are not present in the $G_{n,n,c/n}$ model. Indeed, this intuition turns out to be correct. We plot the performance of \textsc{Greedy} in two models in Figure~\ref{fig:greedy_comp} and observe that \textsc{Greedy} on $G_{n,n,c/n}$ inputs outperforms $\RTPAM(n,c)$ for constant $c$.

\section{Conclusion}
\label{sec:conclusion}
We have introduced a new stochastic model for the  online bipartite matching problem. In our model, a random Erd\H{o}s-R\'{e}nyi type graph is generated first. Then an input instance graph is generated in rounds where in the $i^{th}$ round, the corresponding input type node appears consecutively $n_i$ times where $n_i$ is distributed according to the Poisson distribution with parameter 1. 

More generally, this model is just a specific case of a broad  class of stochastic online models for graph problems where type graphs are generated by some random or adversarial processes and then the $i^{th}$ input type node occurs consecutively $n_i$ times where $n_i$ is determined by another random process so as to model some limited form of ``locality of reference''. More generally, we could use a Markov process to generate the next type node instance. 

The $G_{n,n,p}$  Erd\H{o}s-R\'{e}nyi graphs (where $n_i = 1$ for all $i$) and   i.i.d. models
where the type graph $G$ is determined adversarially or according to a random  process (and where  the $i^{th}$ round is drawn i.i.d. from the type graph) fit within this general class of stochastic models. 

As in Mastin and Jaillet~\cite{MastJailGnnp} and Besser and  Poloczek~\cite{BesserP17}, we analyze the performance of the simplest greedy algorithm.  As in other such studies, it is often the case that simple greedy or ``greedy-like'' algorithms perform well on real benchmarks or stochastic settings, well beyond what worst case analysis might suggest.  Our specific $\RTPAM$ model introduces dependencies between online nodes that do not appear in other stochastic models for maximum bipartite matching.  
These dependencies in the $\RTPAM$ model  
result in some technical challenges in addition to the non-trivial analysis in Mastin and Jaillet. As in Mastin and Jaillet, our analysis falls into three classes dependening on the edge probabilities 
$p =  c/n$.
As in Mastin and Jaillet, the regimes $c = o(1)$ and $c = \omega(1)$ result in approximation ratios that approach 1 as $n$ increases.
And as in Mastin and Jaillet, we obtain an almost precise approximation ratio (modulo the estimate of the expected size of  optimal matching) for the regime of constant $c$. Given the input  dependencies our worst case bound (i.e. for the $c^*$ that minimizes the approximation ratio) is significantly less than in Mastin and Jaillet. 

As we have suggested, our $\RTPAM$ model is just a specific case of a wide class of online stochastic models that have not been studied with respect to any algorithm. We believe that such a study will be both technically interesting as well as becoming more applicable to many ``real-world'' settings where there is ``locality of reference''. We briefly discuss a further extension of our $\RTPAM$ model and its connection to the existing literature in Appendix~\ref{sec:3-step}. Finally, we have begun an experimental study of the performance of Greedy in comparison to algorithms that exploit the underlying  type graph in a distributional model (e.g.,  \cite{ FeldmanMMM2009,MGS,HMZ,Jaillet2014,BKP2018}).    

\bibliographystyle{plain}
\bibliography{thesis}

\appendix

\section{Two Technical Lemmas}
\label{sec:tech_lemmas}
In this appendix, we prove two lemmas that are used in Section~\ref{sec:greedy_poisson}. The first lemma gives a closed-form expression for $h(x) := \mathbb{E}(\min(\Poi(x), \Poi(1)))$ in terms of the modified Bessel functions of the first kind and the Marcum's Q functions. The derivation relies on the answer from Stats Stackexchange~\cite{Skellam}.

\begin{lemma}[Lemma~\ref{lem:h_expr} restated,\cite{Skellam}]
For $x > 0$ we have
\[ h(x) = \frac{1+x - 2 e^{-x-1}\left(I_0(2 \sqrt{x}) + \sqrt{x}I_1(2\sqrt{x})\right) - (1-x)\left(1-2Q_1(\sqrt{2 x}, \sqrt{2})\right)}{2}.\]
\end{lemma}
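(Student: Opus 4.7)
The plan is to evaluate $h(x)=\mathbb{E}(\min(X,Y))$ for independent $X\sim\Poi(x)$ and $Y\sim\Poi(1)$ via the elementary identity
\[
\min(X,Y)=\frac{X+Y-|X-Y|}{2},
\]
which gives $h(x)=\frac{x+1}{2}-\frac{1}{2}\mathbb{E}|D|$ where $D=X-Y$. So the whole task reduces to computing $\mathbb{E}|D|$. The key classical fact to invoke is that $D$ follows the Skellam distribution: for every $k\in\mathbb{Z}$,
\[
\Pr(D=k)=e^{-(x+1)}\, x^{k/2}\, I_{|k|}\!\left(2\sqrt{x}\right),
\]
using the symmetry $I_k=I_{-k}$. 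Writing $\mathbb{E}|D|=2\mathbb{E}(D_+)-\mathbb{E}(D)=2\mathbb{E}(D_+)-(x-1)$ then reduces everything to the single positive-tail sum $\mathbb{E}(D_+)=\sum_{k=1}^{\infty}k\Pr(D=k)$.

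To handle this sum I would apply the Bessel recurrence $k\,I_k(z)=\tfrac{z}{2}\bigl(I_{k-1}(z)-I_{k+1}(z)\bigr)$ at $z=2\sqrt{x}$. Setting $s=\sqrt{x}$ this rewrites
\[
\mathbb{E}(D_+)=e^{-(x+1)}\,s\sum_{k=1}^{\infty}s^{k}\bigl(I_{k-1}(2s)-I_{k+1}(2s)\bigr).
\]
Shifting the summation index by $\pm 1$ in the two pieces turns both of them into the single series $T:=\sum_{j=0}^{\infty}s^{j}I_j(2s)$, up to the low-order boundary corrections $I_0(2s)$ and $sI_1(2s)$. After collecting terms one obtains
\[
\mathbb{E}(D_+)=(x-1)\,e^{-(x+1)}T+e^{-(x+1)}\bigl(I_0(2\sqrt{x})+\sqrt{x}\,I_1(2\sqrt{x})\bigr).
\]

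The next step is to recognize the series $T$ as a Marcum $Q$ value. With the choice $a=\sqrt{2x},\, b=\sqrt{2}$ one has $ab=2\sqrt{x}$, $a/b=\sqrt{x}$, and $(a^2+b^2)/2=x+1$, so by the definition of $Q_n$,
\[
Q_1(\sqrt{2x},\sqrt{2})=e^{-(x+1)}\sum_{k=0}^{\infty}x^{k/2}I_k(2\sqrt{x})=e^{-(x+1)}T.
\]
Substituting back yields $\mathbb{E}(D_+)=(x-1)Q_1(\sqrt{2x},\sqrt{2})+e^{-(x+1)}\bigl(I_0(2\sqrt{x})+\sqrt{x}I_1(2\sqrt{x})\bigr)$, from which
\[
h(x)=\frac{x+1}{2}-\mathbb{E}(D_+)+\frac{x-1}{2}=x-(x-1)Q_1(\sqrt{2x},\sqrt{2})-e^{-(x+1)}\bigl(I_0(2\sqrt{x})+\sqrt{x}I_1(2\sqrt{x})\bigr).
\]
A final cosmetic rearrangement using the identity $(1-x)\bigl(1-2Q_1(\sqrt{2x},\sqrt{2})\bigr)=(1-x)+2(x-1)Q_1(\sqrt{2x},\sqrt{2})$ puts this into the exact form stated in the lemma.

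The main obstacle is purely bookkeeping in the Bessel-series manipulation: the index shifts on $\sum s^{k}I_{k\pm1}(2s)$ have to be tracked carefully so that the ``missing'' low-order terms $I_0$ and $sI_1$ show up with the right signs, and one must verify that the chosen parameters $(a,b)=(\sqrt{2x},\sqrt{2})$ are exactly what makes the residual series collapse to $Q_1$. Once those two pieces are handled the remainder is routine algebra.
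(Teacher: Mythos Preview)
Your argument is correct and shares the same overall skeleton as the paper's proof: both reduce $h(x)$ to $\mathbb{E}|D|$ for a Skellam variable $D=\Poi(x)-\Poi(1)$ via the identity $\min(X,Y)=(X+Y-|X-Y|)/2$, and both finish by recognising the residual series as $Q_1(\sqrt{2x},\sqrt{2})$ with the parameter choice $a=\sqrt{2x}$, $b=\sqrt{2}$.

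The one substantive difference is in how $\mathbb{E}|D|$ is extracted. The paper writes down a closed form for the moment generating function of $|D|$ in terms of Marcum's $Q$ and Bessel functions (asserted rather than derived) and then differentiates at $t=0$. You instead compute $\mathbb{E}(D_+)=\sum_{k\ge 1}k\,\Pr(D=k)$ directly, using the Bessel recurrence $kI_k(z)=\tfrac{z}{2}\bigl(I_{k-1}(z)-I_{k+1}(z)\bigr)$ together with two index shifts to collapse everything to the single series $T=\sum_{j\ge 0}s^jI_j(2s)$. Your route is more self-contained and makes it transparent where each of the terms $I_0$, $\sqrt{x}\,I_1$, and $(x-1)Q_1$ originates; the paper's route is shorter on the page but hides the real work inside the unproved MGF formula. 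Either way the final algebra matches, and your verification that $(a,b)=(\sqrt{2x},\sqrt{2})$ gives $ab=2\sqrt{x}$, $a/b=\sqrt{x}$, $(a^2+b^2)/2=x+1$ is exactly the check needed.
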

\begin{proof}
A random variable that is equal to the difference between two Poisson random variables has Skellam distribution. As the first step, we reduce the computation of $h(x)$ to the computation of the expectation of an absolute value of a Skellam distributed random variable:
\begin{align*}
    h(x) &= \mathbb{E}(\min(\Poi(x), \Poi(1))) = \frac{\mathbb{E}(\Poi(x)) + \mathbb{E}(\Poi(1)) - \mathbb{E}(|\Poi(x)-\Poi(1)|)}{2}  \\
    &= \frac{1 + x - \mathbb{E}(|\Poi(x)-\Poi(1)|)}{2}.
\end{align*} 
In the rest of the proof, we show how to compute $\mathbb{E}(|\Poi(x) - \Poi(1)|)$. From the PMF of a Skellam variable, one easily obtains the PMF of the absolute value of our Skellam variable:
\[    \Pr(|\Poi(x)-\Poi(1)|=k) = \left\{
\begin{array}{ll}
    e^{-1-x} \left(x^{k/2} I_k(2\sqrt{x})+x^{-k/2}I_{-k}(2\sqrt{x}) \right) & \text{ if $k > 0$}, \\
    e^{-1-x} I_0(2\sqrt{x}) & \text{ if $k = 0$}.
\end{array} 
\right.\]
Then we write down the MGF and simplify it using the Marcum's Q function to get:
\begin{align*}
M_{|\Poi(x)-\Poi(1)|}(t) = &e^{-1-x}\left(Q_1(\sqrt{2 \exp(-t)}, \sqrt{2 x \exp(t)})\exp(xe^t+e^{-t}) \right.\\
&\left. +Q_1(\sqrt{2x \exp(-t)}, \sqrt{2  \exp(t)})\exp(e^t+xe^{-t})-I_0(2\sqrt{x})\right).
\end{align*}
Now, we can take the derivative of the MGF at $t=0$ to derive:
\[ \mathbb{E}(|\Poi(x)-\Poi(1)|) = 2 e^{-x-1}\left(I_0(2 \sqrt{x}) + \sqrt{x}I_1(2\sqrt{x})\right) + (1-x)\left(1-2Q_1(\sqrt{2 x}, \sqrt{2})\right).\]
\end{proof}

In the next lemma we prove that the function defining the differential equation in Section~\ref{sec:greedy_poisson} is Lipschitz. This is needed to establish the main result of the paper via Wormald's theorem.

\begin{lemma}[Lemma~\ref{lem:key_lem_2} restated]
The function $f_c(x)$ is Lipschitz on $[0,1]$.
\end{lemma}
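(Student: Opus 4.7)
The plan is to reduce the Lipschitz property of $f_c$ to that of $h$ via the chain rule, and then establish that $h$ is $1$-Lipschitz on $[0,\infty)$ by a direct coupling argument rather than trying to bound the unwieldy closed-form expression from Lemma~\ref{lem:h_expr}. Since $f_c(x) = h(c(1-x))$, any Lipschitz constant $L$ for $h$ on $[0,c]$ immediately yields the Lipschitz constant $cL$ for $f_c$ on $[0,1]$, so it suffices to prove $|h(y) - h(x)| \le |y - x|$ for all $0 \le x \le y$.

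The core step is the following coupling. For $0 \le x \le y$, let $Y_x \sim \Poi(x)$ and $Y' \sim \Poi(y-x)$ be independent, and set $Y_y = Y_x + Y'$, so that $Y_y \sim \Poi(y)$ by the standard Poisson convolution identity. Let $Z \sim \Poi(1)$, independent of $Y_x, Y'$. Because $\min(\cdot, Z)$ is nondecreasing and $1$-Lipschitz in its first argument, we have the pointwise inequalities
\[ 0 \;\le\; \min(Y_y, Z) - \min(Y_x, Z) \;\le\; Y_y - Y_x \;=\; Y'. \]
Taking expectations yields $0 \le h(y) - h(x) \le \mathbb{E}(Y') = y - x$, so $h$ is $1$-Lipschitz on $[0, \infty)$.

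Composing with the affine map $x \mapsto c(1-x)$ then gives $|f_c(x) - f_c(x')| = |h(c(1-x)) - h(c(1-x'))| \le c|x - x'|$, showing that $f_c$ is $c$-Lipschitz on $[0,1]$. I do not anticipate a significant obstacle here: the only potential pitfall would be trying to differentiate the Bessel/Marcum-$Q$ expression for $h$ directly, which is why I would avoid that route entirely in favor of the coupling, which also has the pleasant byproduct of giving the sharp constant $c$.
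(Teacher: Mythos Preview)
Your argument is correct. It is also genuinely different from the paper's route. The paper proves Lipschitzness by differentiating the closed-form expression for $h$ from Lemma~\ref{lem:h_expr} and bounding $|h'(x)|$ term by term: it simplifies $2h'(x)$ to $e^{-1-x}\bigl(I_0(2\sqrt{x})-I_1(2\sqrt{x})/\sqrt{x}-I_2(2\sqrt{x})\bigr)+2-2Q_1(\sqrt{2x},\sqrt{2})$, then bounds the Bessel combination by $e^x$ via its power series and the Marcum-$Q$ term by its probabilistic interpretation. Your coupling bypasses all of this special-function calculus and directly exploits the probabilistic definition of $h$; it is shorter, requires no knowledge of Bessel identities or Marcum-$Q$ properties, and yields the explicit Lipschitz constant $c$ (whereas the paper only obtains $|f_c'(x)|=O(1)$ with an implicit constant depending on $c$). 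The paper's approach, on the other hand, delivers an explicit formula for $h'$ that could in principle be reused elsewhere, and also shows $h$ is nondecreasing as a byproduct of the sign analysis --- though your coupling gives monotonicity of $h$ for free as well, from the lower inequality $0\le h(y)-h(x)$.
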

\begin{proof}
We prove the statement by showing that the derivative of $f_c(x)$ is bounded on $[0,1]$. By definition of $f_c$ we have $f_c(x) = h(c(1-x)).$ Thus, $f_c'(x) = -c h'(c(1-x))$. Hence bounding $|f_c'(x)|$ on $[0,1]$ amounts to bounding $h'(x)$ on $[0,c]$. We compute the derivative of $h$ as follows:
\begin{align*}
2 h'(x) &= 2 + 2 e^{-1-x}(I_0(2\sqrt{x})+\sqrt{x}I_1(2\sqrt{x}))-e^{-1-x}(3I_1(2\sqrt{x})/\sqrt{x}+I_0(2\sqrt{x})+I_2(2\sqrt{x}))\\
&\;\;\;\; -2Q_1(\sqrt{2x},\sqrt{2})+2(1-x)(Q_2(\sqrt{2x},\sqrt{2})-Q_1(\sqrt{2x},\sqrt{2})) \\
&= 2 + 2 e^{-1-x}(I_0(2\sqrt{x})+\sqrt{x}I_1(2\sqrt{x}))-e^{-1-x}(3I_1(2\sqrt{x})/\sqrt{x}+I_0(2\sqrt{x})+I_2(2\sqrt{x}))\\
&\;\;\;\; -2Q_1(\sqrt{2x},\sqrt{2})+2(1-x)e^{-1-x}I_1(2\sqrt{x})/\sqrt{x} \\
&= e^{-1-x}(I_0(2\sqrt{x})-I_1(2\sqrt{x})/\sqrt{x}-I_2(2\sqrt{x})) +2-2Q_1(\sqrt{2x},\sqrt{2}), 
\end{align*}
where the second equation follows from the definition of the Marcum's Q function and the third equation follows by collecting and simplifying terms with the factor of $e^{-1-x}$ together. We complete the proof of the lemma by bounding each of the terms.

By the definition of the Bessel functions of the first kind we have
\begin{align*}
I_0(2\sqrt{x})-I_1(2\sqrt{x})/\sqrt{x}-I_2(2\sqrt{x}) &= \sum_{i=0}^\infty \frac{1}{i! i!} x^i - \sum_{i=0}^\infty \frac{1}{i! (i+1)!} x^i - \sum_{i=1}^\infty \frac{1}{(i-1)!(i+1)!} x^i \\
&= 1 - 1 + \sum_{i=1}^\infty \left(\frac{1}{i!i!}-\frac{1}{i!(i+1)!}-\frac{1}{(i-1)!(i+1)!} \right) x^i\\
&= \sum_{i=1}^\infty \frac{1}{i!(i+1)!} x^i \le \sum_{i=0}^\infty \frac{1}{i!} x^i = e^x.
\end{align*}

Thus, the first term is bounded by $e^c$. The second term $2-2Q_1(\sqrt{2x},\sqrt{2})$ is bounded by $2$. This follows from interpretation of the Marcum's Q function as a probability --- in particular, we have $Q_1(\sqrt{2x}, \sqrt{2}) \in [0,1]$ (see~\cite{Helstrom68}). All in all, we have $|f_c'(x)| = O(1)$ for $x\in[0,1]$.
\end{proof}

\section{Figures}
\label{sec:figures}
In this appendix, we collect all figures mentioned in the paper.

\begin{figure}[ht]
\centering
\includegraphics[scale=0.25]{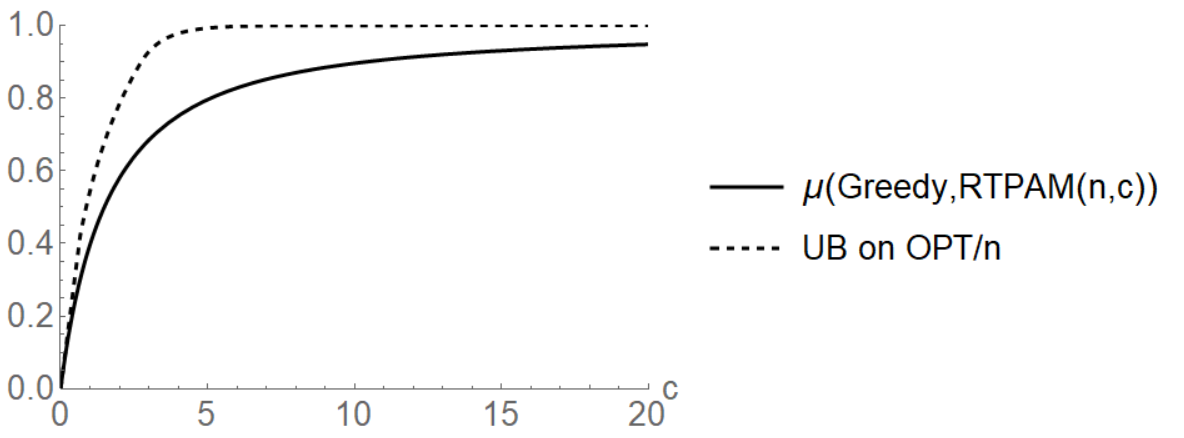}
\caption{The exact asymptotic ratio of \textsc{Greedy} to $n$ from Theorem~\ref{thm:greedy_lb}, and the upper bound on the asymptotic ratio of a size of maximum matching to $n$ from Theorem~\ref{thm:match_ub}. Both results are plotted as functions of $c$ in the $\RTPAM(n,c)$ model.}
\label{fig:both-on-one}
\end{figure}

\begin{figure}[ht]
\centering
\includegraphics[scale=0.25]{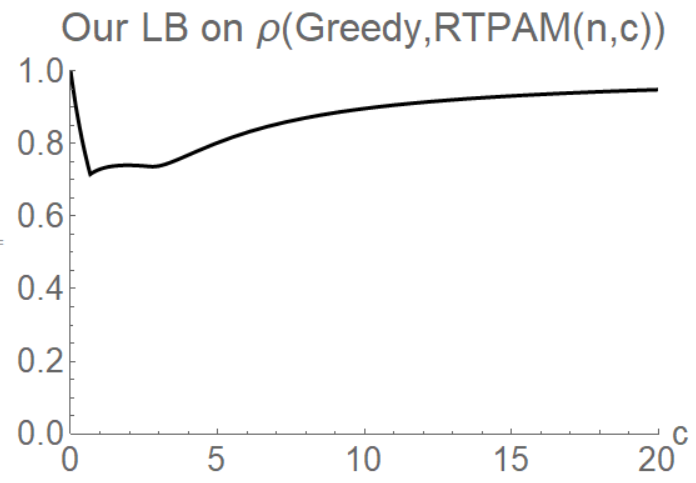}
\caption{The lower bound on the competitive ratio for \textsc{Greedy} in $\RTPAM(n,c)$ model as a function of $c$.}
\label{fig:approx_ratio}
\end{figure}

\begin{figure}[ht]
\centering
\includegraphics[scale=0.25]{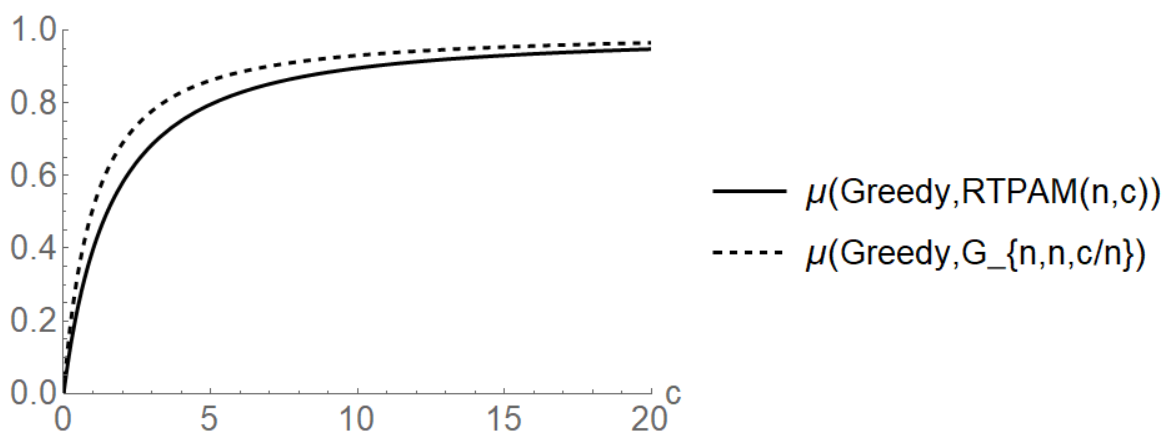}
\caption{The exact asymptotic ratio of \textsc{Greedy} to $n$ in $\RTPAM(n,c)$ input model ( Theorem~\ref{thm:greedy_lb}) versus the exact asymptotic ratio of \textsc{Greedy} to $n$ in $G_{n,n,c/n}$ input model (Theorem~\ref{thm:greedy_gnnp}).}
\label{fig:greedy_comp}
\end{figure}

\begin{figure}[ht]
\centering
\includegraphics[scale=0.25]{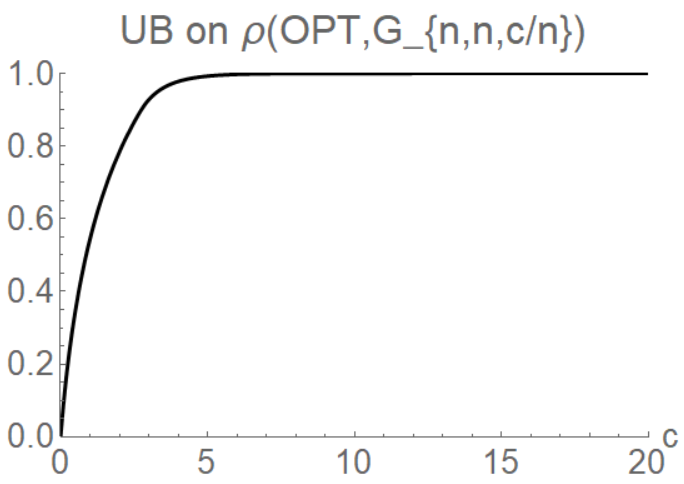}
\caption{The upper bound on the asymptotic ratio of the size of a maximum matching in $G_{n,n,c/n}$ to $n$ given in Theorem~\ref{thm:match_gnnp_ub}.}
\label{fig:match_gnnp_ub}
\end{figure}

\clearpage
\section{The 3-Step Version of $\RTPAM$}
\label{sec:3-step}
Another model of interest is defined by removing the assumption that online nodes corresponding to a given type occur consecutively. One can think about this model as the two-step $\RTPAM(n,c)$ model with an additional step of permuting online nodes. We call this model three-step $\RTPAM(n,c)$ and denote it by $3\mbox{-}\RTPAM(n,c)$. 

Recall that $Z_i$ is the number of online nodes of type $i$. In the conventional known i.i.d. model with integral types, we have that $Z_i \sim \Bin(n, 1/n)$ subject to $\sum_i Z_i = n$. The $3\mbox{-}\RTPAM(n,c)$ model is natural since $\Bin(n,1/n)$ converges to $\Poi(1)$ in distribution. Additionally, $3\mbox{-}\RTPAM(n,c)$ removes the dependency $Z_{1}+\cdots+Z_{n}=n$ and the total number of online nodes is $n$ in expectation, but can differ from $n$ on any particular random instantiation. In many proofs in the existing literature on the conventional known i.i.d. model, authors are essentially using the approximation $\Bin(n, 1/n) \approx \Poi(1)$ to conclude a proof (which explains why the number $1/e$ is so ubiquitous), while dealing with $\Bin(n,1/n)$ in intermediate computations. Working with $3\mbox{-}\RTPAM(n,c)$ directly seems more elegant. Moreover, Claim~\ref{claim:equiv} and the discussion following it show that one can transfer the results for $3\mbox{-}\RTPAM(n,c)$ to the conventional known i.i.d. model without any deterioration of parameters. Arguably, because of these reasons, the $3\mbox{-}\RTPAM(n,c)$ model is more natural than the conventional known i.i.d. model. 

It is easy to observe that the proof of Theorem~\ref{thm:little_c_regime} ($c = o(1)$) goes through for $3\mbox{-}\RTPAM(n,c)$. The same holds for the proof of Theorem~\ref{thm:omega_1_c_regime} ($c = \omega(1)$) because of the principle of delayed decisions. The case of constant $c$ is an open problem and a natural question to study next.

To our knowledge, the first mention of a Poisson distribution in online bipartite matching modelling was the Poisson arrivals model of Jaillet and Lu~\cite{Jaillet2014}. In that model, the first step is to sample the total number of online nodes $Z$ from $\Poi(n)$. The second step is to draw $Z$ online nodes i.i.d. from the known distribution. Jaillet and Lu were motivated by relaxing the assumption that the number of online nodes is known in advance exactly. It is easy to see that the Poisson arrivals model is equivalent to $3\mbox{-}\RTPAM(n,c)$. We present this calculation here for completeness.

\begin{claim}
\label{claim:equiv}
The $3\mbox{-}\RTPAM(n,c)$ model is equivalent to the Poisson arrivals model.
\end{claim}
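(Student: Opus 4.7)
The plan is to condition on the total number of online nodes and then match the conditional distributions over typed arrival sequences in the two models. Both models start by generating the same random type graph $\widehat{G} \sim G_{n,n,c/n}$ -- in the Poisson arrivals formulation, this corresponds to instantiating the ``known distribution'' as the uniform distribution over the $n$ types of $\widehat{G}$ -- so I would condition on $\widehat{G}$ and compare only the remaining online-side randomness.

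First I would invoke the standard fact that if $Z_1,\ldots,Z_n$ are i.i.d. $\Poi(1)$ random variables, then $Z := \sum_{i=1}^n Z_i \sim \Poi(n)$ and, conditioned on $Z = m$, the vector $(Z_1,\ldots,Z_n)$ follows the multinomial distribution with $m$ trials and uniform cell probabilities $1/n$. In $3\mbox{-}\RTPAM(n,c)$ we create $Z_i$ copies of type $i$, so the total number of online nodes is exactly $Z$, and conditioned on $Z = m$ the multiset of types that appear has multinomial counts. In the Poisson arrivals model the total number of online nodes is drawn from $\Poi(n)$ by construction, matching the marginal of $Z$.

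Next I would use the classical identity that if $(Z_1,\ldots,Z_n) \sim \mathrm{Mult}(m; 1/n,\ldots,1/n)$ and one writes a length-$m$ sequence containing $Z_i$ symbols equal to $i$ and then applies a uniformly random permutation, the resulting sequence is distributionally identical to $m$ i.i.d. uniform draws from $[n]$. This is exactly the effect of the permutation step of $3\mbox{-}\RTPAM(n,c)$, and $m$ i.i.d. uniform draws are exactly the arrival sequence produced by the Poisson arrivals model conditioned on its total count being $m$.

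Since both models agree on the distribution of the type graph, on the marginal distribution of the total arrival count, and on the conditional distribution of the ordered typed arrival sequence for every value $m$, their joint distributions over (type graph, typed online arrival sequence) coincide, and therefore so do the induced distributions over instance graphs equipped with arrival orders. I do not anticipate a substantive technical obstacle here; the only care needed is in pinning down the identification of the Poisson arrivals model's ``known distribution'' with the uniform distribution over $\widehat{G}$'s types, and in cleanly invoking the multinomial-plus-uniform-permutation-equals-i.i.d. identity.
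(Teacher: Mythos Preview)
Your argument is correct and takes a genuinely different route from the paper. The paper fixes a single type $i$ and computes directly, by summing over the total arrival count $j$, that in the Poisson arrivals model $\Pr(\text{type }i\text{ occurs }k\text{ times}) = \tfrac{1}{e\,k!}$, i.e.\ each $Z_i$ is marginally $\Poi(1)$; it then remarks that, conditioned on the $Z_i$, the arrival order is a uniform permutation in both models. You instead condition on the total $Z=\sum_i Z_i$, invoke the standard Poisson--multinomial identity (i.i.d.\ $\Poi(1)$ variables summed give $\Poi(n)$, and conditioned on the sum the vector is multinomial with uniform cell probabilities), and then use the fact that a multinomial count vector followed by a uniform permutation yields an i.i.d.\ sequence. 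Your approach is more structural and has the advantage of pinning down the full \emph{joint} law of the ordered typed arrival sequence in one stroke, whereas the paper's computation, as written, only verifies the marginal of a single $Z_i$ and leaves the joint independence of the $Z_i$ in the Poisson arrivals model implicit. The paper's approach, on the other hand, is a short self-contained calculation that does not appeal to any named identities.
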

\begin{proof}
We consider the expected number of online nodes of type $i$. In $3\mbox{-}\RTPAM(n,c)$, the probability of exactly $k$ occurrences of type $i$ is $\frac{1}{e\cdot k!}$. In the Poisson arrivals model, the probability of type $i$ occurring $k$ times is the following:

\begin{align*}
\Pr(\text{exactly }k\text{ occurrences of type }i) &= \sum_{j=k}^{\infty}\frac{n^{j}}{e^{n}j!}\binom{j}{k}\left(\frac{1}{n}\right)^{k}\left(1-\frac{1}{n}\right)^{j-k}\\
&= \sum_{j=k}^{\infty} \frac{1}{e^{n}}\cdot \frac{1}{k! (j-k)!}\cdot (n-1)^{j-k}\\
&= \frac{1}{e^{n}k!} \sum_{j=0}^{\infty} \frac{(n-1)^{j}}{j!} = \frac{e^{n-1}}{e^{n}k!} = \frac{1}{e\cdot k!}.
\end{align*}
Thus, the number of online nodes of a given type, i.e., the $Z_i$,  are distributed identically in the two models. Moreover, conditioned on the values of the $Z_i$, the order in which the online nodes appear in both models is distributed according to a random permutation.
\end{proof}

Finally, \cite{Jaillet2014} prove that a \emph{greedy} $d$-competitive algorithm in the conventional known i.i.d. model is $d$-competitive in the Poisson arrivals model and therefore in $3\mbox{-}\RTPAM(n,c)$, and vice versa.


\end{document}